\definecolor{links}{RGB}{11, 85, 255}
\definecolor{cites}{RGB}{0, 200, 0}
\definecolor{urls}{RGB}{255, 116, 0}
\pgfplotsset{compat=1.14}
\newcommand{\cM}{\mathcal{M}}
\newcommand{\cT}{\mathcal{T}}
\newcommand{\R}{\mathbb{R}}
\newcommand{\F}{\mathbb{F}}
\newcommand{\ind}[1]{\mathbbm{1}\left\{#1\right\}}
\newcommand{\items}{M}
\newcommand{\agents}{N}
\newcommand{\signals}[1]{S_{#1}}
\newcommand{\en}[1]{\alpha_{#1}}
\newcommand{\allsignals}{S}
\newcommand{\prices}{P}
\newcommand{\mech}{\cM}
\newcommand{\allocations}{\mathcal{A}}
\newcommand{\prop}[3]{\text{PROP}_{#1}(#2,#3)}
\newcommand{\mms}[3]{\text{MMS}_{#1}(#2,#3)}
\newcommand{\aps}[3]{\text{APS}_{#1}(#2,#3)}
\newcommand{\inst}{\mathcal{I}}
\newtheorem{theorem}{Theorem}[section]
\newtheorem{corollary}[theorem]{Corollary}
\newtheorem{remark}[theorem]{Remark}
\newtheorem{observation}[theorem]{Observation}
\theoremstyle{definition}
\newtheorem{definition}[theorem]{Definition}
\newtheorem{example}[theorem]{Example}
\begin{document}

\title{\huge Fair Division with Interdependent Values}
\date{}
\author{Georgios Birmpas\thanks{Sapienza University of Rome, Email:  \texttt{birbas@diag.uniroma1.it}} \and Tomer Ezra\thanks{Sapienza University of Rome, Email:  \texttt{ezra@diag.uniroma1.it}} \and Stefano Leonardi\thanks{Sapienza University of Rome, Email:  \texttt{leonardi@diag.uniroma1.it}} \and Matteo Russo\thanks{Sapienza University of Rome, Email:  \texttt{mrusso@diag.uniroma1.it}}}

\maketitle

\begin{abstract}
    We introduce the study of designing allocation mechanisms for fairly allocating indivisible goods in settings with interdependent valuation functions.
    In our setting, there is a set of goods that needs to be allocated to a set of agents (without disposal).
    Each agent is given a private signal, and his valuation function depends on the signals of all agents.    
    Without the use of payments, there are strong impossibility results for designing strategyproof allocation mechanisms even in settings without interdependent values.
    Therefore, we turn to design mechanisms that always admit equilibria that are fair with respect to their true signals, despite their potentially distorted perception.
    To do so, we first extend the definitions of pure Nash equilibrium and well-studied fairness notions in literature 
    to the interdependent setting.
    We devise simple allocation mechanisms that always admit a fair equilibrium with respect to the true signals. We complement this result by showing that, even for very simple cases with binary additive interdependent valuation functions, no allocation mechanism that always admits an equilibrium, can guarantee that all equilibria are fair with respect to the true signals.
    \end{abstract}

\section{Introduction}\label{sec:intro}

The problem of fair division centers around the challenge of allocating a collection of resources to a set of individuals in a fair way. The roots of this problem can be traced back to the early work of Banach, Knaster, and Steinhaus \citep{Steinhaus49}, who introduced the notion of \emph{proportionality}, a concept that demands each person receives at least an equal share of the total value. Another prominent concept in the realm of fairness is \emph{envy-freeness} \citet{Foley67,GamovS58,Varian74}, which dictates that each individual values their resources as much as anyone else's. However, when resources are indivisible, proportionality and envy-freeness become impossible to attain in general. For example, consider an instance with two agents and one good, being positively valued by both: No allocation exists where no agent envies the other, or every agent gets his proportional share. This motivated the introduction of new (relaxed) fairness notions, such as EF1 (Envy-Freeness up to One Good) \citep{LiptonMMS04,Budish11}, EFX (Envy-Freeness up to Any Good) \citep{CaragiannisKMPS19}, and MMS (Maximin Share) \citep{Budish11}, to grapple with the division of indivisible resources to agents with equal entitlements, and APS (the AnyPrice Share) \citep{BabaioffEF21}, WMMS (Weighted Maximin Share) \citep{FarhadiGHLPSSY19}, WEF (Weighted Envy-Freeness) \citep{ChakrabortyISZ21} or $\ell$-out-of-$d$ share \citep{BabaioffNT21}, to agents with unequal entitlements. For an overview of results on the area, we refer the reader to the survey of \citet{AABFRLMVW22}.

When combining the problem of fairly allocating items with incentives, the problem becomes much more challenging. \citet{CKKK09}, \citet{MarkakisP11} and \citet{ABM16} introduced the strategic version of the problem, where the agents are assumed to be selfish, and their goal is to maximize their own utility. In particular, they considered the question of whether it is possible to have truthful allocation mechanisms (without payments) that provide fairness guarantees. This question was later resolved by \citep{AmanatidisBCM17}, who showed that truthfulness and fairness are incompatible even for the case of additive valuation functions, as no meaningful fairness notion can be guaranteed by a truthful allocation mechanism. This impossibility, led subsequent works to pursue positive results regarding truthfulness and fairness in more specialized valuation function settings, such as dichotomous additive \citep{Aziz20,HalpernPPS20}, matroid rank functions \citep{BabaioffEzFe21,BarmanV22,ViswanathanZ22}, or combining fairness with weaker versions of truthfulness, e.g., \citep{PV22}. In a work that is mostly related to ours, \citet{AmanatidisBFLLR21} followed a different approach and explored the fairness properties of the pure Nash equilibria (PNE) of non-truthful mechanisms. 
They focused on the case of additive agents, and showed that there are mechanisms that always admit PNE, all of which induce fair allocations according to the agents' true valuation functions. \citet{AmanatidisBLLR23} later expanded the applicability of these results to richer valuation function classes (e.g. cancellable, and submodular).

All the mechanisms that have been devised in the context of strategic fair division assume the agent's valuation functions to be \emph{independent} of each other. In several scenarios, however, such an assumption is either too strong or unrealistic. The concept of agents with interdependent valuation functions was first introduced in the context of auctions by \citet{Wilson69} and \citet{MilgromW82}. It has recently gained a lot of attention in the algorithmic game theory community, for efficient and strategyproof implementation of IDV auctions \citep{JehielM01, EdenFFG18, EdenFFGK19, EdenFTZ21, EdenGZ22, LuSZ22} and IDV public projects \citep{CohenFMT23} (we defer the interested reader to \Cref{sec:related}). However, its applicability goes further beyond that. In the case of fair division, consider, for example, a couple of heirs that need to partition (indivisible) inheritance goods between them: Each of them possesses some information (signal) about the goods, but their value depends on the other's signal too. The strategic nature of the situation is apparent, as it becomes clear that both parties have a vested interest in distorting the truth regarding their signals to manipulate the other party's perception and ultimately gain an advantage in the allocation of goods. Whether by exaggerating or downplaying the value of certain goods, each party hopes to influence the other's perception and come out ahead in the final allocation. Incorporating private signals into the mechanism design process poses a critical challenge now that the goal becomes to ensure fairness that aligns with the true valuation functions of all agents involved, as opposed to the perceived ones. 
Since allocation mechanisms that do not use monetary transfers cannot compensate the agents for revealing their private signals, having guarantees with respect to the true signal is even more challenging. 
Our main goal is to design allocation mechanisms that seek to establish equilibria, where the allocation is fair with respect to the agents' \emph{true} values, despite their potentially distorted perception.

\subsection{Our Contributions}
Before we begin, we would like to highlight that although the aforementioned impossibility results regarding truthfulness and fairness in the independent value model transfer to the setting of interdependent values, the positive results that regard the fair PNE of non-truthful mechanisms do not. Therefore, we initiate the study of mechanisms that admit the existence of fair equilibria with respect to the true signals of the agents, in settings where their values are interdependent. 

We extend several notions of fairness, along with the notion of pure Nash equilibrium, to the interdependent value model, and we design a class of mechanisms that, for every signal vector, have at least one pure Nash equilibrium with the following property: the allocation that corresponds to it, is fair with respect to the \emph{true} values (in contrast to what has been reported or perceived by the agents). The common characteristic between the mechanisms of this class is that each agent is not only required to report her own signal, but is also asked to report a \emph{guess} for everybody else's signal. We refer to the mechanisms of this class, as \emph{guessing} mechanisms. 
We now demonstrate, through an example, that \emph{guessing} mechanisms are a necessity in the IDV model as if an agent is not allowed to submit an additional bid beyond her own signal, then there might be cases where no pure Nash equilibrium is fair, according to any meaningful fairness notion.
\begin{example}\label{example}
    Let us consider an instance with two agents and the set of four goods $\items = \{a, b, c, d\}$, along with a mechanism $\cM$ that always has at least one fair equilibrium for every signal. Agent 1 values good $j \in M$ just according to her signal, i.e., $v_{1j} = s_{1j}$. Agent 2 instead values the same good according to agent 1's signal, i.e., $v_{2j} = s_{1j}$. Without loss of generality, assume that agent 2 is not able to report additional information beyond her own signal. Even more restrictively, his declaration cannot affect the mechanism's output at all. Consider the following instance where the values agent 1 (that also defines agent 2's) are $v_1 = (1,1,1,1) = v_2$. By our assumption on $\cM$, we know that there exists a report $r_1$ that agent 1 can declare, which induces a fair pure Nash equilibrium according to the true value. This corresponds to allocation $(A_1, A_2)$, where $A_1, A_2 \neq \emptyset$ for the allocation to be fair. The specifics of this allocation depend on the fairness notion that we examine. We now consider a new valuation function for agent 1 that associates $v_{1j} = s_{1j} = 1$ if $j \in A_1$, and $0$ otherwise. For this new instance, in any pure Nash equilibrium, agent 1 should receive all goods in $A_1$, as otherwise she could declare $r_1$ and get them. This means that the goods that agent 2 receives have $0$ value for him in any pure Nash equilibrium. Thereby, this shows the absence of fairness with respect to any meaningful fairness notion, a contradiction.
\end{example}

\noindent Below we present a roadmap to our paper:
\begin{itemize}[leftmargin=4mm]
    \item In \Cref{sec:equal}, we consider the case of two \textit{general} monotone agents with \textit{equal} entitlements, and we design a mechanism that is based on the \emph{cut and choose protocol}. Our mechanism  (Mechanism \ref{alg:cutchoose}), induces an equilibrium where the corresponding allocation is MMS and EFX for the cutter and EF for the chooser.
    \item In \Cref{sec:unequal}, we consider the case of two \textit{general} monotone agents with \textit{unequal} entitlements, and we devise a mechanism (Mechanism \ref{alg:pricechoose}) that induces an equilibrium where the corresponding allocation gives at least the APS to one agent, and the proportional share to the other agent as long as her valuation function is XOS. We, in fact, show that achieving the same fairness guarantees is impossible if the valuation functions of both agents are subadditive. As a byproduct, we emphasize that Mechanism \ref{alg:pricechoose} for the special case of independent additive values guarantees that both agents receive their APS in every equilibrium, which to our knowledge, is the first fairness guarantee in a strategic setting for agents with arbitrary entitlement and additive valuations\footnote{\citet{SuksompongT22} and \citet{SuksompongT23} showed fairness guarantees in a strategic setting for agents with unequal entitlements in the special cases of binary valuation functions and matroid rank valuation functions respectively.}.
    \item In \Cref{sec:multi}, we consider the case of three or more agents. There, we present a way of transforming, in a black-box manner, any algorithm into a mechanism with at least one pure Nash equilibrium that guarantees the same properties of the algorithm with respect to the true signals of the agent. 
    \item In \Cref{sec:main_imp}, we complement the above picture with the following negative result: it is impossible to construct allocation mechanisms that always admit a PNE, for which all equilibria are fair with respect to the true signals for either of the fairness notions of MMS and EF1. This impossibility is shown under agents with additive valuation functions and binary signals, and creates a stark separation with the independent valuation model considered in \citep{AmanatidisBFLLR21}.
\end{itemize}

\subsection{Further Related Work on Interdependent Mechanism Design with Money} \label{sec:related}

As previously underlined, mechanism design (with or without money) has traditionally been conceived for settings where agents' valuations are assumed to be independent. In this setting, a celebrated result by \citet{vickrey1961, Clarke1971, Groves1973} (the VCG mechanism) resolves the problem of truthful mechanism design optimally. However, the independence assumption falls short of characterizing natural scenarios, such as the illustrative oil drilling auction example of \citet{Wilson69}.  The auctioned land's value depends on how much oil there is underneath, and different agents might have different pieces of information (signals) about it. Most importantly, unlike traditional settings, each agent's value for that parcel of land depends on \emph{all} such pieces of information. In the context of single-item auctions, \citet{MilgromW82} formulated the \emph{interdependent value} (IDV) model, which prescribes each agent $i$ has a \emph{private} signal $s_i$ that summarizes the information she possesses about the item being auctioned. Furthermore, the value agent $i$ attributes to the item is a function of all signals, i.e., $v_i(s_i, s_{-i})$.

In the IDV model, it is impossible to design dominant strategy incentive-compatible mechanisms for single-item auctions in that if one agent does not report her signal truthfully, then another agent might not even be aware of its true valuation. More strongly, it is even impossible\footnote{\citet{JehielM01} show this impossibility even for Bayesian settings.} to design ex-post incentive compatible mechanisms \citep{JehielM01}, unless a condition called \emph{single-crossing} is met \citep{RoughgardenT13}. This signifies that each agent values her own signal above everybody else's. Beyond not being realistic, the single-crossing condition does not generalize well to multi-dimensional signal settings: \citet{EdenFFG18, EdenFTZ21, EdenGZ22}, and especially \citet{EdenFFGK19}, circumvent this issue (and later refined by \citet{LuSZ22}), extending ex-post incentive compatibility results to settings where valuations satisfy \emph{Submodularity over Signals} (SOS), a natural property to impose on valuations. Loosely, this means that the marginal increase in one's value due to her signal's increase is smaller the higher everybody else's signal is. Beyond auctions, \citet{CohenFMT23} have recently studied the problem of interdependent public projects.

\section{Model}


We study the problem of fairly allocating a set $\items$ of $m$ indivisible goods to a set $\agents=[n]$ of $n\geq 2$ agents with interdependent valuations. Each agent $i\in\agents$ is characterized by the following three parameters: (1) a private signal $s_i$ from the set of potential signals $\signals{i}\neq \emptyset$, where we denote by $\allsignals$ the Cartesian product $\bigtimes_{i\in \agents} \signals{i} $; (2) a publicly known entitlement $\en{i} \in (0,1)$; and (3) a publicly\footnote{The assumption that the valuations are public (and only the signals are private) is without loss of generality since one can use some dedicated bits of the private signal to encode the private valuation, and these bits will not influence the other agents' valuations.} known monotone and normalized\footnote{A valuation function $v$ is monotone if for every signal vector $s$ and sets $T\subseteq T' \subseteq \items$ it holds that $v(s,T)\leq v(s,T')$.  A valuation function $v$ is normalized if for every signal vector $s$ it holds $v(s,\emptyset) =0$.} combinatorial valuation $v_i:\allsignals \times 2^\items \rightarrow \R_{\geq 0}$.
We assume without loss of generality that the sum of entitlements is $1$, and we denote by $\alpha$ the vector of entitlements $(\en{1},\ldots,\en{n})$.
A special case is when all entitlements are $\nicefrac{1}{n}$, (i.e., equal entitlements).
An allocation $A=(A_1,\ldots,A_n)$ of the goods is a partition of $\items$ where agent $i$ receives the set of items $A_i$. We denote by $\allocations$ the set of all allocations of $\items$ (without disposal) to the set of agents $\agents$.
An instance of our setting is described by $\inst=(\agents,\items, \alpha,\allsignals,v_1,\ldots,v_n)$.

\paragraph{Fairness Notions.} We consider the following natural extensions of the well-studied fairness notions to the interdependent setting. For the case of equal entitlements (i.e., $\en{i}=\nicefrac{1}{n}$ for all $i\in\agents$), the \emph{envy}-based fairness notions that we consider are:
\begin{definition}[Envy-based Fairness Notions]\label{def:ef}
    An allocation $A$ is EF (respectively, EF1, EFX)  for agent $i \in \agents$ with respect to a signal vector $s=(s_1,\ldots,s_n)$ if, for all agents $i'\in \agents$ it holds that  
    \begin{align*}
        v_i(s,A_i) &\geq v_i(s,A_{i'}) \tag{EF}\\
        A_{i'}=\emptyset  \mbox{ or } \exists j \in A_{i'}:~  v_i(s,A_i) &\geq v_i(s,A_{i'}\setminus\{j\}) \tag{EF1}\\
        \forall j \in A_{i'}, ~v_i(s,A_i) &\geq v_i(s,A_{i'}\setminus\{j\}) \tag{EFX}
    \end{align*} 
An allocation $A$ is EF (respectively, EF1, EFX) with respect to signal vectors $s^{(1)},\ldots,s^{(n)}$ if for all $i\in\agents$, $A$ is EF (respectively EF1, EFX) for agent $i$ with respect to signal vector $s^{(i)}$.
If $s^{(1)}=\ldots=s^{(n)}=s$, we say that $A$ is EF (respectively, EF1 or EFX) with respect to $s$.
\end{definition}

The \emph{share}-based fairness notions that we consider are:
\begin{definition}[Share-based Fairness Notions]\label{def:share}
     The PROP  (respectivley MMS or APS) of agent $i$ with valuation $v_i:\allsignals \times 2^\items\rightarrow \R_{\geq 0 } $ with respect to signal vector $s=(s_1,\ldots,s_n)$  is:
    \begin{align*}
       &\prop{i}{\en{i}}{s} = \en{i} \cdot v_i(s, \items) \tag{PROP}\\
        &\mms{i
    }{\en{i}=\nicefrac{1}{n}}{s} =\max_{(A_1,\ldots,A_n)\in \allocations} \min_{i' \in \agents} v_i(s,A_{i'}) \tag{MMS}\\
        &\aps{i}{\en{i}}{s} =  \min_{p \in \prices} \max_{T\subseteq \items } v_i(s,T) \cdot \ind{\sum_{j\in T} p_j  \leq \en{i}} \tag{APS},
    \end{align*}
    where $\prices$ is the set of all non-negative price vectors that sum to $1$, and given a price vector $p\in \prices$, and a subset of items $T\subseteq \items$, we denote by $p(T)$ the sum of prices of items in $T$, (i.e., $p(T)=\sum_{j \in T} p_j$). 
    Note that the MMS is only defined for the case of equal entitlements.
    An allocation $A$ is PROP (respectively, MMS or APS) with respect to signal vectors $s^{(1)},\ldots,s^{(n)}$ if for every agent $i\in\agents$, the value of agent $i$ with respect to signal vector $s^{(i)}$ is at least the PROP (respectively, MMS or APS) of agent $i$ with respect to signal vector $s^{(i)}$.
    If $s^{(1)}=\ldots=s^{(n)}=s$, we say that $A$ is PROP (respectively, MMS or APS) with respect to $s$.
\end{definition}


\paragraph{Allocation Mechanisms.} 

\begin{definition}[Deterministic Mechanisms]\label{def:mech}
    A (deterministic) allocation mechanism $\mech$ (without payments) for the interdependent setting is defined by a product set of bids $B= \bigtimes_{i\in \agents }B_i$ and a mapping  $\mech:S \times B \rightarrow \allocations$. The allocation mechanism collects from each agent $i\in \agents $ a report $(r_i,b_i)\in S_i \times B_i$, and allocates the goods according  to $\mech(r,b)$, where $r=(r_1,\ldots,r_n)$, and $b=(b_1,\ldots,b_n)$. We also denote by $\mech_i(r,b)$ as the bundle of goods that agent $i$ receives under reports corresponding to $(r,b)$.
\end{definition}

\begin{definition}[Interdependent Pure Nash Equilibria]\label{def:equalibrium}    
For an instance $\inst$, and a mechanism $\mech$ for signal vector $s\in S$  a report vector $((r_1,b_1),\ldots,(r_n,b_n))$ is a pure Nash equilibrium (PNE) if for every agent $i \in \agents$ and report $(r_i',b_i')$ it holds that: $$ v_i(r^{(i)},\mech_i(r,b)) \geq v_i(r^{(i)},\mech_i(r',b')), $$
where $b=(b_1,\ldots,b_n),b'=(b_i',b_{-i}),r=(r_1,\ldots,r_n), r'=(r_i',r_{-i}), r^{(i)}=(s_i,r_{-i})$.
Note that the vector signal that agent $i$ calculates his value with respect to, is the reported signals of the other agents along with his own true signal.

\end{definition}


Some of our results apply to the special cases with additive, XOS, and subadditive valuations.
We say that an instance is additive if for every agent $i\in \agents$, every signal vector $s\in S$, and every subset $T\subseteq \items$ it holds that  $v_i(s,T)=\sum_{j \in T} v_i(s,\{j\})$. 
We say that an instance is XOS if for every agent $i\in \agents$, every signal vector $s\in S$, there exist an integer $\ell$ and $\ell$ non-negative vectors $a^1,\ldots,a^\ell$ of dimension $|\items|$, such that for every subset $T\subseteq \items$ it holds that  $v_i(s,T)=\max_{\ell' \in \{1,\ldots,\ell\}}\sum_{j \in T} a^{\ell'}_j$. 
We say that an instance is subadditive if for every agent $i\in \agents$, every signal vector $s\in S$, and every pair of subsets $T,T' \subseteq \items$ it holds that  $v_i(s,T) + v_i(s,T') \geq  v_i(s,T \cup T')$. 

\begin{remark}\label{remark:independent}
We note that independent private values can be captured by our model by setting all $v_i$ to be independent of $s_{-i}$. I.e., there exist functions $u_i: S_i \times 2^\items \rightarrow \R_{\geq 0 }$ such that for every agent $i\in\agents$, for every signal vector $s=(s_1,\ldots,s_n)\in \allsignals$, and every subset $T\subseteq \items$ it holds that $ v_i(s,T) = u_i(s_i,T) $.
\end{remark}

\section{The Case of 2 Agents with Equal Entitlements}\label{sec:equal}
In this section, we consider the case where there are two agents (i.e., $n=2$) with equal entitlements (i.e., $\en{1}=\en{2}=\nicefrac{1}{2}$) and general monotone valuation functions. We devise the IDV Cut-\&-Choose mechanism (Mechanism \ref{alg:cutchoose}), which involves agents reporting their own signal along with a guess of the other agent's signal. The mechanism then divides the goods into two sets in order to maximize the value of the set (according to the cutter's report) the chooser does not select (again, according to the cutter's report). The chooser then selects the better set based on his report. We show that the IDV Cut-\&-Choose induces an equilibrium where, once one agent declares her own signal truthfully and guesses correctly the other's signal, the other agent's best response is to also report truthfully his own signal, and repeat the signal of the other agent. The resulting allocation in this equilibrium is MMS and EFX for the cutter and EF for the chooser with respect to the true signals.

\subsection{The Interdependent Value Cut-\&-Choose Mechanism}
We consider both shared-based and envy-based notions of fairness and devise a mechanism that has guarantees of both types (EFX and MMS for the cutter, and EF for the chooser). Formally, we show that:
\begin{theorem}\label{thm:identical}
For every instance $\inst=(\agents,\items,\alpha=(\nicefrac{1}{2},\nicefrac{1}{2}), S=S_1\times S_2, v_1,v_2)$ composed of two agents with equal entitlements, there exists an allocation mechanism $\mech$ such that for every signal vector $s\in S$ there exists a report $((r_1,b_1),(r_2,b_2))$ such that: (1) Report $((r_1,b_1),(r_2,b_2))$ is a PNE; (2) The allocation $\mech((r_1,r_2),(b_1,b_2))$ is MMS and EFX for one agent with respect to the true signal vector $s$; (3) The allocation $\mech((r_1,r_2),(b_1,b_2))$ is EF for the other agent with respect to the true signal vector $s$.
\end{theorem}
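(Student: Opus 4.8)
The plan is to exhibit the mechanism $\mech$ explicitly and then show that the profile in which every agent reports her own signal together with a correct guess of the other agent's signal is a PNE that induces a fair allocation. Designate agent $1$ the \emph{cutter} and agent $2$ the \emph{chooser}. On reports $((r_1,b_1),(r_2,b_2))$, write $\hat s_1=(r_1,b_1)$ and $\hat s_2=(b_2,r_2)$ for the signal vectors the two agents perceive. Using only $\hat s_1$, the mechanism picks a set $C\in\argmax\{v_1(\hat s_1,Z):Z\subseteq\items,\ v_2(\hat s_1,Z)\le v_2(\hat s_1,\items\setminus Z)\}$, breaking ties toward larger $|Z|$; the constraint says that a chooser holding the cutter's reported valuation $v_2(\hat s_1,\cdot)$ would decline $Z$, so $C$ is the cutter's most valuable (according to her own report) such ``leftover'' set. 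Then the \emph{actual} chooser takes whichever of $C$ and $\items\setminus C$ has larger $v_2(\hat s_2,\cdot)$-value, and the cutter gets the other. Two features matter: $C$ depends on the cutter's report alone, and after the chooser moves the cutter is left with \emph{some} set $Z$ with $v_2(\hat s_2,Z)\le v_2(\hat s_2,\items\setminus Z)$.

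At the truthful profile $\hat s_1=\hat s_2=s$, the chooser takes $\items\setminus C$ (of $v_2(s,\cdot)$-value at least $v_2(s,C)$, by the constraint) and the cutter keeps $C$, which for $\hat s_1=s$ lies in $\argmax\{v_1(s,Z):v_2(s,Z)\le v_2(s,\items\setminus Z)\}$; write $C^*:=C$ and $D^*:=\items\setminus C$. Claim (3) is immediate: $v_2(s,D^*)\ge v_2(s,C^*)$, so the chooser does not envy the cutter. For the MMS half of (2), a partition $(A_1,A_2)$ attaining $\mms{1}{\nicefrac{1}{2}}{s}$ has a side $A_k$ with $v_2(s,A_k)\le v_2(s,\items\setminus A_k)$; that side is feasible in the maximization and $v_1(s,A_k)\ge\min(v_1(s,A_1),v_1(s,A_2))=\mms{1}{\nicefrac{1}{2}}{s}$, so $v_1(s,C^*)\ge\mms{1}{\nicefrac{1}{2}}{s}$. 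For the EFX half, suppose some $j\in D^*$ has $v_1(s,D^*\setminus\{j\})>v_1(s,C^*)$ and move $j$ across the cut: if $v_2(s,C^*\cup\{j\})\le v_2(s,D^*\setminus\{j\})$ then $D^*\setminus\{j\}$ is feasible with strictly larger $v_1(s,\cdot)$-value, contradicting optimality of $C^*$; otherwise $C^*\cup\{j\}$ is feasible with $v_1(s,\cdot)$-value $\ge v_1(s,C^*)$, hence equal, but strictly larger cardinality, contradicting the tie-break. So the allocation $(C^*,D^*)$ is EFX for the cutter.

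For claim (1) I check the two one-sided deviations. The chooser's report does not affect $C$, only which of the two fixed sets she takes; since her payoff is evaluated at her true signal together with the cutter's (truthful) report, i.e.\ at $s$, reporting truthfully already gives her the $v_2(s,\cdot)$-better set, so she cannot gain. For the cutter, suppose she reports $\hat s_1\neq s$ while the chooser is truthful: the mechanism returns some $C$, then the actual chooser (evaluating with $v_2(s,\cdot)$) takes the $v_2(s,\cdot)$-larger of $C,\items\setminus C$, leaving the cutter a set $Z$ with $v_2(s,Z)\le v_2(s,\items\setminus Z)$; since her payoff is also evaluated at $s$, her realized value $v_1(s,Z)\le\max\{v_1(s,Z'):v_2(s,Z')\le v_2(s,\items\setminus Z')\}=v_1(s,C^*)$, exactly the value she secures by honesty. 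Hence truthful reporting is a best response for both agents, the profile is a PNE, and by the previous paragraph its allocation $(C^*,D^*)$ satisfies (2) and (3).

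The crux is this last comparison, and it drives the design: the cut must be a function of the cutter's report (so she can aim it at a balanced split) while her own bundle is unavoidably the side the chooser declines (so aiming is futile), because only then is every bundle she can reach one of the ``leftover'' sets over which honest reporting is already optimal. The one genuinely delicate point is extracting EFX rather than just MMS: this requires the largest-cardinality tie-break, and it is cleanest to define feasibility of a candidate $Z$ through the \emph{non-strict} inequality $v_2(s,Z)\le v_2(s,\items\setminus Z)$, so that the single-good exchange argument closes regardless of how $v_2(s,\cdot)$ ties are broken on the partition.
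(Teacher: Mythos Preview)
Your mechanism and overall strategy are sound and genuinely different from the paper's. The paper's Mechanism~1 also has the cutter partition according to her report and the chooser pick the better side, but to obtain EFX the paper branches: if the cutter's optimal ``leftover'' value $\xi_1$ strictly exceeds her MMS she is in fact envy-free, while if $\xi_1=\mms{1}{\nicefrac12}{s}$ the mechanism falls back on the Plaut--Roughgarden leximin partition (their Theorem~3.2) to guarantee EFX on both sides. Your route avoids this external ingredient entirely: you keep a single optimization and enforce EFX through the maximum-cardinality tie-break plus a one-good exchange argument. That is more elementary and self-contained; the paper's version, in exchange, inherits the stronger property that in the ``else'' branch the partition is EFX from \emph{both} sides simultaneously.

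Two small points need fixing. First, your EFX case split is inverted: the condition $v_2(s,C^*\cup\{j\})\le v_2(s,D^*\setminus\{j\})$ is exactly the feasibility constraint for $C^*\cup\{j\}$, not for $D^*\setminus\{j\}$, so it is in \emph{this} case that you invoke monotonicity and the cardinality tie-break, and in the ``otherwise'' case ($v_2(s,D^*\setminus\{j\})<v_2(s,C^*\cup\{j\})$) that $D^*\setminus\{j\}$ becomes feasible and contradicts optimality directly. Second, you never specify how the chooser breaks ties between $C$ and $\items\setminus C$; your argument that ``the chooser takes $\items\setminus C$'' at the truthful profile relies on the mechanism sending her to $\items\setminus C$ when $v_2(\hat s_2,C)=v_2(\hat s_2,\items\setminus C)$. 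Without this, a tie could hand the cutter $D^*$ rather than $C^*$, and nothing you have proved bounds $v_1(s,D^*)$ from below. The paper handles the analogous issue explicitly (``in case of a tie select $A_2=T^*(r_1,b_1)$''); once you add the corresponding rule, your proof goes through.
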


To prove the above theorem, we devise the IDV Cut-\&-Choose mechanism (Mechanism \ref{alg:cutchoose}). Following \Cref{def:mech}, in Mechanism \ref{alg:cutchoose}, the set of bids of agents $1,2$, are the set of signals of agents $2,1$ respectively (i.e., $B_1=S_{2}$, and $B_2=S_1$). For ease of understanding, in the statements that follow, we subsume the notation of above.

In our proof and in the mechanism construction, we also use the following theorem that immediately derives from \citet[Theorem 4.2]{PlautR20}, for the case of independent valuations.
\begin{theorem}[\citep{PlautR20}]\label{thm:plautr20}
For every monotone valuation function $v:2^\items \rightarrow \R_{\geq 0}$, 
there exists a set $T^*$ such that $ v(T^*) \geq v(\items \setminus T^*) = MMS(v)$, and for every $j \in T^*$, it holds that $v(T^*\setminus \{j\}) \leq v(\items\setminus T^*)$, where $MMS(v)=\underset{T\subseteq \items}{\max}~\min(v(T),v(\items \setminus T))$. 
\end{theorem}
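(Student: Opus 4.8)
The plan is to prove this by an extremal (potential-minimization) argument over the finite collection of MMS-optimal partitions, in the spirit of standard EFX-existence proofs for identical valuations. First I would define $\mathcal{O}$ to be the set of all subsets $T\subseteq \items$ for which $\min(v(T),v(\items\setminus T))=MMS(v)$; since $\items$ is finite and the maximum in the definition of $MMS(v)$ is attained, $\mathcal{O}$ is nonempty. For each $T\in\mathcal{O}$, exactly the smaller of the two parts carries the value $MMS(v)$, so I would orient every such partition so that its \emph{heavy} side (the one with the larger value) is distinguished from its \emph{light} side. I would then select $T^*$ to be the heavy side of whichever partition in $\mathcal{O}$ minimizes the cardinality of its heavy side; this selection is well defined because $\mathcal{O}$ is finite.

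The first requirement, $v(T^*)\geq v(\items\setminus T^*)=MMS(v)$, is then immediate: by the orientation $v(T^*)\geq v(\items\setminus T^*)$, and since the chosen partition lies in $\mathcal{O}$ the smaller value $v(\items\setminus T^*)$ equals $MMS(v)$.

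The core of the argument is the EFX-type condition, which I would establish by contradiction. Suppose some good $j\in T^*$ satisfies $v(T^*\setminus\{j\})>v(\items\setminus T^*)=MMS(v)$, and consider relocating $j$ across the partition to obtain $(A,B)$ with $A=T^*\setminus\{j\}$ and $B=(\items\setminus T^*)\cup\{j\}$. On one side $v(A)>MMS(v)$ by assumption; on the other, monotonicity of $v$ gives $v(B)\geq v(\items\setminus T^*)=MMS(v)$. Hence $\min(v(A),v(B))\geq MMS(v)$, but since $MMS(v)$ is the maximum over all partitions of the smaller value, this minimum cannot strictly exceed $MMS(v)$; it therefore equals $MMS(v)$, so $(A,B)\in\mathcal{O}$. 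Because $v(A)>MMS(v)$ equals the minimum value, the minimum is attained at $B$, so $B$ is the light side and $A=T^*\setminus\{j\}$ is the heavy side of this new optimal partition — of cardinality $|T^*|-1<|T^*|$, contradicting the minimality in the choice of $T^*$.

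I expect the most delicate points to be the orientation bookkeeping and the choice of the right potential. Minimizing the \emph{value} $v(T^*)$ of the heavy side would not guarantee a strict decrease when $j$ is a null good (monotonicity only yields $v(T^*\setminus\{j\})\leq v(T^*)$), which is exactly why I minimize the \emph{cardinality} of the heavy side, so that relocating $j$ forces a strict drop. Two short sanity checks complete the argument: that after the move $B$ is genuinely the light side (so the new heavy side really is $A$), and that ties where $v(T^*)=v(\items\setminus T^*)$ cause no trouble, since either orientation still supports the same relocation step.
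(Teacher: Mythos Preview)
The paper does not give its own proof of this statement; it simply imports the result from \citet[Theorem~4.2]{PlautR20} and uses it as a black box inside Mechanism~\ref{alg:cutchoose}. So there is no in-paper argument to compare against.

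Your proof is correct. The extremal choice of $T^*$ as a heavy side of minimum cardinality among MMS-optimal partitions, together with the relocation step and monotonicity, cleanly forces the contradiction; the observation that $v(A)>MMS(v)=\min(v(A),v(B))$ pins the minimum on $B$ and makes $A$ the new (strictly smaller) heavy side is exactly what is needed. The one imprecision is the sentence ``exactly the smaller of the two parts carries the value $MMS(v)$'': when $v(T)=v(\items\setminus T)$ both sides carry it, so ``heavy'' should be defined weakly (any $T$ with $v(T)\ge v(\items\setminus T)=MMS(v)$), and then one minimizes $|T|$ over all such $T$. You already flag this tie case at the end, so the argument stands; just tighten the definition up front rather than patching it afterward.
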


\begin{algorithm}
    \DontPrintSemicolon
    \SetAlgorithmName{Mechanism}{mechanism}{}
    \caption{IDV Cut-\&-Choose}\label{alg:cutchoose}
    \KwData{Report $(r_1, b_1)$ of agent 1 (cutter) and $(r_2,b_2)$ of agent 2 (chooser).}
    \Fn{\cutter}{
        Let $\cT(r_1,b_1) := \{T \subseteq M \mid v_2((r_1,b_1), T) \geq v_2((r_1,b_1), \items \setminus T)\}$\;
        Let $\xi_1(r_1,b_1) := \underset{T \in \cT(r_1,b_1)}{\max}~v_1((r_1,b_1), \items \setminus T)$\;
        \If{$\xi_1(r_1,b_1) > \mms{1}{\nicefrac{1}{2}}{(r_1, b_1)}$}{
            Let $T^*(r_1,b_1)$ be a set in $\cT(r_1,b_1)$ for which $v_1((r_1,b_1), \items \setminus T) =\xi_1(r_1,b_1)$\;
          }
          \Else{
            Let $T^*(r_1,b_1)$ be the set of \Cref{thm:plautr20}, when applied to $v_1((r_1,b_1),\cdot)$\;
          }
        Partition goods into subsets $(T^*(r_1,b_1), \items \setminus T^*(r_1,b_1))$\;
    }\;

    \Fn{\chooser}{
        Select $A_2 = \underset{T \in \{T^*(r_1,b_1), \items \setminus T^*(r_1,b_1)\}}{\arg\max} v_2((b_2,r_2), T)$, in case of a tie select $A_2 = T^*(r_1,b_1)$\;
    }
    \KwResult{Return allocation $\cM((r_1,r_2),(b_1,b_2)) = (A_1, A_2)$, where $A_1 = \items \setminus A_2$.}
\end{algorithm} 

We first observe that since for $T_{\text{MMS}}= \arg\max_{T\subseteq \items} \min(v_1(s,T),v_1(s,\items\setminus T))$, and since at least one of $T_{\text{MMS}},\items \setminus T_{\text{MMS}}$ is in $\cT(s_1,s_2)$, it holds that:  
\begin{align}\label{eq:xi}
    \xi_1(s_1,s_2)  & = \underset{T \in \cT(s_1,s_2)}{\max}~v_1((s_1,s_2), \items \setminus T) \nonumber\\ & \geq \min(v_1((s_1,s_2), T_{\text{MMS}}),v_1((s_1,s_2), \items \setminus T_{\text{MMS}})) = \mms{1}{1/2}{s}.
\end{align}

\begin{restatable}{lemma}{lemequilibrium}\label{lem:equilibrium}
    For Mechanism \ref{alg:cutchoose} and signal vector $s=(s_1,s_2) \in \allsignals$, report vector $ ((s_1,s_2),(s_2,s_1))$ is a PNE.
\end{restatable}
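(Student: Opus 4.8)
The plan is to verify the two deviation conditions separately: first that the chooser (agent 2) has no profitable deviation from reporting $(r_2,b_2)=(s_2,s_1)$, and then that the cutter (agent 1) has no profitable deviation from $(r_1,b_1)=(s_1,s_2)$. The key observation making this tractable is that under the proposed report vector, the mechanism's internal computations all use the true signal vector $s$: the cutter's side computes $\cT(s_1,s_2)$, $\xi_1(s_1,s_2)$, $T^*(s_1,s_2)$ with respect to $v_1(s,\cdot)$ and $v_2(s,\cdot)$, and the chooser's side evaluates $v_2((b_2,r_2),\cdot)=v_2((s_1,s_2),\cdot)=v_2(s,\cdot)$. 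So along the equilibrium path both agents' true valuations $v_i(s,\cdot)$ and the mechanism's perceived valuations coincide.

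\emph{Chooser's deviations.} Fix the cutter's report at $(s_1,s_2)$, so the partition $(T^*(s_1,s_2),\items\setminus T^*(s_1,s_2))$ is fixed regardless of what agent 2 reports. Agent 2's report $(r_2,b_2)$ only influences which of the two blocks he receives, via $\arg\max_{T}v_2((b_2,r_2),T)$. Agent 2 evaluates his true utility with respect to $r^{(2)}=(s_2, r_{-2})$; but here $r_{-2}=r_1$ is forced to equal $s_1$ on the equilibrium path, wait — more carefully, in checking that $((s_1,s_2),(s_2,s_1))$ is a PNE we hold agent 1's report fixed at $(s_1,s_2)$ and let agent 2 deviate, so agent 2's value is computed with respect to $r^{(2)}=(s_2,s_1)=s$. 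Since only two outcomes are possible for agent 2 (get $T^*$ or get its complement) and reporting $(s_2,s_1)$ makes the mechanism pick the block maximizing $v_2(s,\cdot)$, which is exactly agent 2's true objective, no deviation can do better. (Tie-breaking is immaterial since the two options are then equal in value.)

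\emph{Cutter's deviations.} Now fix agent 2's report at $(s_2,s_1)$ and let agent 1 deviate to $(r_1,b_1)$. If agent 1 reports $(s_1,s_2)$, the mechanism produces $(T^*(s_1,s_2),\items\setminus T^*(s_1,s_2))$ and, crucially, since agent 2's reported valuation equals $v_2((s_2,s_1),\cdot)=v_2(s,\cdot)$ and $T^*(s_1,s_2)\in\cT(s_1,s_2)$ means $v_2(s,T^*)\ge v_2(s,\items\setminus T^*)$, the chooser selects $A_2=T^*(s_1,s_2)$ (taking the tie in the cutter's favor when equality holds), leaving agent 1 with $A_1=\items\setminus T^*(s_1,s_2)$, whose value to agent 1 is $v_1(s,\items\setminus T^*(s_1,s_2))$. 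By the mechanism's construction this equals $\max(\xi_1(s_1,s_2),\mms{1}{1/2}{s})$, which by \Cref{eq:xi} is just $\xi_1(s_1,s_2)$ (and also at least the value of the \Cref{thm:plautr20} block). I then need to argue that for any deviating report $(r_1,b_1)$, agent 1 cannot obtain more than $\xi_1(s_1,s_2)$ in true value. The point is that whatever partition $(T^*(r_1,b_1),\items\setminus T^*(r_1,b_1))$ the deviation induces, agent 2 — still reporting truthfully — will choose the block $T$ with $v_2(s,T)\ge v_2(s,\items\setminus T)$, i.e.\ some block in $\cT(s_1,s_2)$, leaving agent 1 with a block in... hmm, leaving agent 1 with the complement, which therefore lies in $\cT(s_1,s_2)$ as well (its complement satisfies the reverse inequality, so it itself — wait, $T\in\cT$ iff $v_2(s,T)\ge v_2(s,\items\setminus T)$; the complement of such $T$ need not be in $\cT$). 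So the correct statement is: agent 1 ends up with a set $\items\setminus T$ where $T\in\cT(s_1,s_2)$, hence $v_1(s,\items\setminus T)\le \max_{T'\in\cT(s_1,s_2)}v_1(s,\items\setminus T')=\xi_1(s_1,s_2)$. This caps the deviation value at $\xi_1(s_1,s_2)$, which is exactly what agent 1 already gets, completing the argument.

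\emph{Main obstacle.} The delicate point is handling ties in the chooser's selection rule and making sure the deviation analysis for the cutter is airtight: one must confirm that even a deviating cutter cannot ``trick'' the chooser into leaving agent 1 a block outside $\{\items\setminus T : T\in\cT(s_1,s_2)\}$ — but since the chooser evaluates with his fixed true report, his selected block is always in $\cT(s_1,s_2)$ by definition, so agent 1's block is always of the form $\items\setminus T$ with $T\in\cT(s_1,s_2)$, and the $\xi_1$ bound applies. The only subtlety is that when the chooser is indifferent the mechanism hands him $T^*(r_1,b_1)$; one checks this does not help the cutter exceed $\xi_1(s_1,s_2)$ either, since in the indifference case both blocks lie in $\cT(s_1,s_2)$ (equality goes both ways), so again agent 1's block is $\items\setminus T^*(r_1,b_1)$ with $T^*(r_1,b_1)\in\cT(s_1,s_2)$.
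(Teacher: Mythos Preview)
Your approach mirrors the paper's: show the chooser cannot affect the partition and so truthful choosing is optimal; then cap the cutter's deviation value by $\xi_1(s_1,s_2)$ because whatever partition the deviation induces, the truthful chooser selects a block in $\cT(s_1,s_2)$.

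There is one genuine gap in your cutter analysis. You assert that ``$T^*(s_1,s_2)\in\cT(s_1,s_2)$ means $v_2(s,T^*)\ge v_2(s,\items\setminus T^*)$, so the chooser selects $A_2=T^*(s_1,s_2)$.'' But $T^*(s_1,s_2)\in\cT(s_1,s_2)$ is only guaranteed in the \emph{if} branch of the mechanism. In the \emph{else} branch, $T^*(s_1,s_2)$ is the set from \Cref{thm:plautr20} applied to $v_1(s,\cdot)$, and that theorem says nothing about $v_2$; so $T^*(s_1,s_2)$ need not lie in $\cT(s_1,s_2)$, and the chooser may well take $\items\setminus T^*(s_1,s_2)$, leaving agent~1 with $T^*(s_1,s_2)$ rather than its complement. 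Your claim ``$A_1=\items\setminus T^*(s_1,s_2)$'' is therefore unjustified in the \emph{else} branch.

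The fix (which is exactly what the paper does) is to treat the two branches separately when bounding agent~1's on-path value: in the \emph{if} branch your argument goes through and $v_1(s,A_1)=\xi_1(s_1,s_2)$; in the \emph{else} branch, regardless of which block the chooser takes, \Cref{thm:plautr20} gives $v_1(s,T^*)\ge v_1(s,\items\setminus T^*)=\mms{1}{\nicefrac{1}{2}}{s}=\xi_1(s_1,s_2)$ (the last equality from the branch condition and \Cref{eq:xi}), so agent~1 still gets at least $\xi_1(s_1,s_2)$. Your parenthetical ``(and also at least the value of the \Cref{thm:plautr20} block)'' gestures at this, but as written it does not cover the possibility that agent~1 ends up with $T^*$ rather than $\items\setminus T^*$.
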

\begin{proof}
    Let us assume agent 1, the cutter, reports $(s_1, s_2)$. Then, since agent 2, the chooser, cannot influence the way goods are partitioned through his report, he has no incentive to deviate as if reporting $(s_2,s_1)$, he will be allocated the better of the two parts according to $(s_1,s_2)$. 
    
    On the other hand, suppose agent 2 reports $(s_2, s_1)$. In this case, if agent 1 reports $(s_1,s_2)$, then 
    $v_1((s_1,s_2),A_1) \geq \xi_1(s_1,s_2)$.
    This is since if we are in the \emph{if} branch of Mechanism~\ref{alg:cutchoose}, then $A_1=\items\setminus T^*(s_1,s_2)$, so it holds by definition of $T^*(s_1,s_2)$.     
    In the \emph{else} branch, we know that $\xi_1(s_1,s_2) = \mms{1}{\nicefrac{1}{2}}{s}$ by \Cref{eq:xi}. By \Cref{thm:plautr20}, both bundles $T^*(s_1,s_2)$ and $\items \setminus T^*(s_1,s_2)$ have values of at least $\mms{1}{\nicefrac{1}{2}}{s}$ (i.e., $
    v_1((s_1,s_2),T^*(s_1,s_2)) ,v_1((s_1,s_2), \items \setminus T^*(s_1,s_2)) \geq \mms{1}{\nicefrac{1}{2}}{s}$), which means that agent 1 gets again at least $\xi_1(s_1,s_2)$ by reporting $(s_1, s_2)$.
    
    If agent 1 deviates and reports some $(r_1',b_1') \neq (s_1,s_2)$, then the split induced by this declaration will be composed of two sets $(A_1', A_2')$. Let $\widetilde{T}$ be the subset agent 2 selects, i.e., $v_2((s_1, s_2), \widetilde{T}) \geq v_2((s_1, s_2), \items \setminus \widetilde{T})$. This implies that $\widetilde{T} \in \cT(s_1,s_2)$. However, assume towards contradiction that the deviation was beneficial, i.e., $v_1((s_1, s_2), \items \setminus \widetilde{T}) > \xi_1(s_1,s_2)$. Then 
    $$\xi_1(s_1,s_2) = \max_{T \in \cT(s_1,s_2)} v_1((s_1,s_2), \items \setminus T)\geq  v_1((s_1,s_2), \items \setminus \widetilde{T}) >\xi_1(s_1,s_2),$$ which is a contradiction, and agent 1 cannot benefit from a deviation.
\end{proof}

\begin{restatable}{lemma}{lemcutter}\label{lem:cutter}
    For Mechanism \ref{alg:cutchoose} and signal vector $s=(s_1,s_2)\in \allsignals$, in PNE $((s_1,s_2),(s_2,s_1))$, the cutter receives at least her MMS with respect to signal vector $s$. Moreover, the allocation $\mech((s_1,s_2),(s_2,s_1))$ is EFX for the cutter with respect to signal vector $s$.
\end{restatable}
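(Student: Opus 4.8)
The plan is to trace the execution of Mechanism~\ref{alg:cutchoose} on the report vector $((s_1,s_2),(s_2,s_1))$ and to argue separately for the two branches of the cutter's procedure. I would first record that, since the cutter reports its true pair $(s_1,s_2)=s$ and the chooser reports $(s_2,s_1)$, every object the mechanism computes---the family $\cT$, the value $\xi_1$, the maximin threshold in the \emph{if}-test, and the chooser's comparison---is evaluated at the true signal vector $s$. In the \emph{if} branch one has $T^*\in\cT(s_1,s_2)$, so $v_2(s,T^*)\ge v_2(s,\items\setminus T^*)$ and the chooser selects $A_2=T^*$, leaving the cutter with $A_1=\items\setminus T^*$ and $v_1(s,A_1)=\xi_1(s_1,s_2)$; by \Cref{eq:xi} this is at least $\mms{1}{\nicefrac{1}{2}}{s}$, which already gives the MMS guarantee here. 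In the \emph{else} branch, $T^*$ is the set of \Cref{thm:plautr20} applied to $v_1(s,\cdot)$, so both $T^*$ and $\items\setminus T^*$ have $v_1(s,\cdot)$-value at least $\mms{1}{\nicefrac{1}{2}}{s}$; whichever of the two the chooser takes (under $v_2(s,\cdot)$), the cutter receives the complement, so in all cases $v_1(s,A_1)\ge\mms{1}{\nicefrac{1}{2}}{s}$, again yielding MMS.

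For EFX I would handle the \emph{else} branch directly from \Cref{thm:plautr20}: if the cutter received $A_1=T^*$, then for every $j\in A_2=\items\setminus T^*$ monotonicity gives $v_1(s,A_1)=v_1(s,T^*)\ge v_1(s,\items\setminus T^*)\ge v_1(s,(\items\setminus T^*)\setminus\{j\})$; if instead $A_1=\items\setminus T^*$, then for every $j\in A_2=T^*$ the defining property of the set in \Cref{thm:plautr20} is exactly $v_1(s,A_1)=v_1(s,\items\setminus T^*)\ge v_1(s,T^*\setminus\{j\})$. The \emph{if} branch is the only place where an idea is needed: here $A_1=\items\setminus T^*$, $A_2=T^*$, $v_1(s,A_1)=\xi_1(s_1,s_2)$, and the branch was entered precisely because $\xi_1(s_1,s_2)>\mms{1}{\nicefrac{1}{2}}{s}$ (a \emph{strict} inequality). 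I would argue by contradiction: if EFX failed, i.e.\ $v_1(s,T^*\setminus\{j\})>v_1(s,\items\setminus T^*)=\xi_1(s_1,s_2)$ for some $j\in T^*$, consider the partition $(T^*\setminus\{j\},\,(\items\setminus T^*)\cup\{j\})$ of $\items$: its first part has $v_1(s,\cdot)$-value strictly above $\xi_1(s_1,s_2)$ by assumption, and its second part has value at least $v_1(s,\items\setminus T^*)=\xi_1(s_1,s_2)$ by monotonicity, so both parts strictly exceed $\mms{1}{\nicefrac{1}{2}}{s}$. This contradicts that $\mms{1}{\nicefrac{1}{2}}{s}$ is the maximum, over all two-part partitions of $\items$, of the value of the smaller part. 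Hence EFX holds for the cutter in this branch as well, which completes the argument.

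I expect the \emph{if}-branch EFX step to be the only non-routine part. The tempting approach---comparing $T^*\setminus\{j\}$ or $(\items\setminus T^*)\cup\{j\}$ against the family $\cT$---does not close, essentially because $\cT$ need not be well behaved under adding or deleting a single good; the resolution is to discard $\cT$ at that point and instead exploit the strict gap $\xi_1(s_1,s_2)>\mms{1}{\nicefrac{1}{2}}{s}$ that triggered the branch, turning a hypothetical EFX violation into a two-part partition whose smaller side beats the maximin value. Everything else reduces to a short case analysis anchored on \Cref{thm:plautr20} and \Cref{eq:xi}, together with the observation that the chooser's choice is forced in the \emph{if} branch (since $T^*\in\cT$) and irrelevant in the \emph{else} branch (since \Cref{thm:plautr20} makes both halves good for the cutter).
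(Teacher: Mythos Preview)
Your proof is correct and follows essentially the same case split as the paper: the MMS guarantee and the \emph{else}-branch EFX argument are identical to the paper's (both anchored on \Cref{eq:xi} and \Cref{thm:plautr20}).

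The one substantive difference is your handling of EFX in the \emph{if} branch. You argue by contradiction, building a new two-part partition $(T^*\setminus\{j\},(\items\setminus T^*)\cup\{j\})$ whose smaller side would beat $\mms{1}{\nicefrac{1}{2}}{s}$. The paper instead takes the direct (contrapositive) route: since for \emph{every} two-part partition the smaller side has $v_1(s,\cdot)$-value at most $\mms{1}{\nicefrac{1}{2}}{s}$, and since $v_1(s,A_1)\ge\xi_1(s_1,s_2)>\mms{1}{\nicefrac{1}{2}}{s}$, it must be that $v_1(s,A_2)\le\mms{1}{\nicefrac{1}{2}}{s}<v_1(s,A_1)$. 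This is the same observation you exploit, but applied to the existing partition $(A_1,A_2)$ rather than a hypothetical new one; it is shorter and actually yields the stronger conclusion that the cutter is envy-free (not merely EFX) in this branch. Your detour through a constructed partition works, but the ``idea'' you flag as being needed here is not really needed---the strict gap $\xi_1>\mms{1}{\nicefrac{1}{2}}{s}$ already forces $A_2$ itself to be the low side.
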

\begin{proof}
    In the equilibrium $((s_1, s_2), (s_2, s_1))$, agent 1 gets $A_1$. 
    In the proof of \Cref{lem:equilibrium} we proved that $v_1((s_1,s_2),A_1) \geq \xi_1(s_1,s_2)$. Combining this with \Cref{eq:xi} shows that the cutter gets at least her MMS with respect to signal vector $s$.
    
    To show that the obtained allocation is EFX for the cutter with respect to $s$, we distinguish into two cases, following how the mechanism proceeds. In the first case (the \emph{if} branch of Mechanism \ref{alg:cutchoose}), $\xi_1(s_1, s_2) > \mms{1}{\nicefrac{1}{2}}{s}$ and we know that $v_1(s, A_1) \geq  \xi_1(s_1, s_2)$. Also, by definition of MMS, at least one subset in each partition into two subsets, must have a value of at most the MMS. Thus, $v_1(s, A_2) \leq \mms{1}{\nicefrac{1}{2}}{s} < v_1(s,A_1)$, so that the cutter does not envy the chooser. 
    
    Otherwise (the \emph{else} branch of Mechanism \ref{alg:cutchoose}), by \Cref{eq:xi}, $\xi_1(s_1, s_2) = \mms{1}{\nicefrac{1}{2}}{s}$, in which case we know, by \Cref{thm:plautr20}, no matter which set agent 1 receives from partition $(A_1,A_2)$, the allocation will be EFX for her with respect to signal vector $s$.
\end{proof}

Since the chooser selects the bundle with higher value according to his report, we can observe the following:
\begin{observation}\label{obs:ef}
     For Mechanism \ref{alg:cutchoose} and signal vector $s=(s_1,s_2)\in \allsignals$, in equilibrium $((s_1,s_2),(s_2,s_1))$, the allocation $\mech((s_1,s_2),(s_2,s_1))$ is EF for the chooser  with respect to $s$.
\end{observation}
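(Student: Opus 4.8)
The plan is simply to unwind the definitions: this is essentially a one‑line consequence of how the chooser selects his bundle, so the ``proof'' is really a bookkeeping check that the relevant signal vectors line up. In the report vector $((s_1,s_2),(s_2,s_1))$, the chooser (agent $2$) submits $(r_2,b_2)=(s_2,s_1)$, so the signal vector with respect to which he compares the two candidate bundles in the Chooser step is $(b_2,r_2)=(s_1,s_2)=s$. I would first observe that this coincides with the signal vector relevant to the fairness claim: by \Cref{def:equalibrium}, the vector agent $2$ perceives as true in this PNE is $r^{(2)}=(r_1,s_2)=(s_1,s_2)=s$, because the cutter reports her own signal truthfully. Hence both the selection rule and the EF guarantee refer to the same vector $s$.

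Given that, the argument is immediate. The cutter's report $(r_1,b_1)=(s_1,s_2)$ fixes the partition $\bigl(T^*(s_1,s_2),\,\items\setminus T^*(s_1,s_2)\bigr)$, and the chooser takes $A_2=\arg\max_{T\in\{T^*(s_1,s_2),\,\items\setminus T^*(s_1,s_2)\}} v_2(s,T)$ (the tie‑breaking rule toward $T^*(s_1,s_2)$ is irrelevant here). Since $A_1=\items\setminus A_2$ is exactly the other block of this two‑part partition, we get $v_2(s,A_2)\ge v_2(s,A_1)$. With $n=2$ agents, agent $1$ is the only agent other than the chooser, so this inequality is precisely the statement that the allocation $\mech((s_1,s_2),(s_2,s_1))=(A_1,A_2)$ is EF for agent $2$ with respect to $s$ (\Cref{def:ef}), which is the claim.

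There is no genuine obstacle in this statement; the only point that warrants care — and the only place the interdependent structure could in principle cause trouble — is confirming that the signal vector entering the Chooser subroutine agrees with the true‑signal‑consistent vector $r^{(2)}$ used in the PNE/fairness definitions. This holds because, in the proposed equilibrium, the cutter is truthful on her own signal coordinate, so $r_1=s_1$ and the two vectors are both equal to $s$.
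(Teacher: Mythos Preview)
Your argument is correct and is exactly the reasoning the paper intends: it notes just before the observation that ``the chooser selects the bundle with higher value according to his report,'' which at the profile $((s_1,s_2),(s_2,s_1))$ means $(b_2,r_2)=s$, giving $v_2(s,A_2)\ge v_2(s,A_1)$ and hence EF for the chooser with respect to $s$. Your extra check that $r^{(2)}=s$ is a harmless elaboration of the same one-line observation.
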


The proof of \Cref{thm:identical} follows by combining \Cref{lem:equilibrium,lem:cutter,obs:ef}.

\subsection{Discussion on Mechanism~\ref{alg:cutchoose}}

We next discuss the extent of \Cref{thm:identical}, and some of its implications. Regarding envy-based notions of fairness, it is clear that there might not exist an allocation that is EF for both agents (e.g., two agents with a single good). Thus, guaranteeing EF for one agent and EFX for the other is the best we can aim for.

For what concerns shared-based notions of fairness, giving the MMS to both agents is impossible for general (subadditive or even XOS) valuations as the next example illustrates.\footnote{We omit the dependence on $s$, since this also holds in non-interdependent settings (and even without incentive considerations).} Consider set of goods $\items = \{a, b, c, d\}$, valuation $v_1 = \max\left(\ind{a \vee b} +\ind{a \wedge b}, \ind{c \vee d} +\ind{c \wedge d} \right)$ for agent 1, and $v_2 = \max\left(\ind{a \vee d} +\ind{a \wedge d}, \ind{b \vee c} +\ind{b \wedge c} \right)$ for agent 2. We observe that (1) these valuations are XOS, and (2) the MMS  for both agents is $2$. However, no allocation gives the MMS  to both agents: If one agent receives one good (and the other three goods), her value is at most $1$. Otherwise, both agents receive two goods, and one of them must have a value of at most $1$ which is strictly less than the MMS. 

In case of subadditive valuations, it holds that EF implies PROP. Therefore, if the chooser's valuation  is subadditive then in equilibrium $((s_1,s_2),(s_2,s_1))$, Mechanism \ref{alg:cutchoose} guarantees agent 2 his proportional share with respect to signal vector $s$, 
and when the chooser's valuation is additive, then it guarantees the MMS with respect to signal vector $s$ to both agents. 

In the special case of identical valuations (i.e., $v_1=v_2$), the MMS can be guaranteed to both agents (for all valuation classes). Indeed, when cutter and chooser have identical valuations, the chooser receives at least his MMS with respect to signal vector $s$, in equilibrium $((s_1,s_2),(s_2,s_1))$ of Mechanism \ref{alg:cutchoose}. This is because the chooser gets a value of at least $\xi_1(s_1,s_2)$ which is at least the MMS.
\section{The Case of 2 Agents with Unequal Entitlements}\label{sec:unequal}
This section considers the case where two agents (i.e., $n=2$) have unequal entitlements and general monotone valuation functions. 
Our main contribution is the design of the IDV Price-\&-Choose mechanism (Mechanism \ref{alg:pricechoose}).
Mechanism \ref{alg:pricechoose}  first prices the goods according to the pricer's report. The pricing tries to maximize the pricer's value of the remaining goods (according to the pricer's report) while assuming that the chooser first picks the set that maximizes his own value (again, according to the pricer's report), and whose total price does not exceed the chooser's entitlement.
As for the equal entitlement case, this mechanism guarantees that there exists a PNE of the same form, of reporting your own signal truthfully and guessing the other's signal correctly. The resulting allocation gives at least the PROP share to the chooser as long as his valuation is XOS, and the APS to the pricer. We, in fact, show that achieving the same fairness guarantees is impossible if both valuation functions are subadditive.

\subsection{The Interdependent Value Price-\&-Choose Mechanism}
We show the following result:

\begin{theorem}\label{thm:different}
For every instance $\inst=(\agents,\items,\alpha=(\en{1},\en{2}), S=S_1\times S_2, v_1,v_2)$ composed of two agents with unequal entitlements with monotone valuations, there exists an allocation mechanism $\mech$ such that for every signal vector $s\in S$ there exists a report $((r_1,b_1),(r_2,b_2))$ such that:
  (1) Report $((r_1,b_1),(r_2,b_2))$ is a PNE;
   (2) The allocation  $\mech((r_1,r_2),(b_1,b_2))$ is APS for one agent with respect to the true signal vector $s$; 
    (3) If the other agent's valuation is XOS, then the allocation  $\mech((r_1,r_2),(b_1,b_2))$ is PROP for the other agent with respect to the true signal vector $s$.
\end{theorem}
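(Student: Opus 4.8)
The plan is to make the informal description precise as follows. Designate agent~$1$ the pricer and agent~$2$ the chooser, with bid sets $B_1=S_2$ and $B_2=S_1$ as in Mechanism~\ref{alg:cutchoose}. On input $((r_1,b_1),(r_2,b_2))$ the mechanism uses \emph{only the pricer's report} to form a price vector
$$p^\star(r_1,b_1)\ \in\ \arg\max_{p\in\prices}\ v_1\big((r_1,b_1),\,\items\setminus S^\star(p)\big),\qquad S^\star(p):=\arg\max_{S\subseteq\items:\ p(S)\le\en{2}}\ v_2\big((r_1,b_1),S\big),$$
with ties broken in both cases by one fixed rule; the chooser then gets $A_2:=\arg\max_{S:\ p^\star(S)\le\en{2}}v_2\big((b_2,r_2),S\big)$ under that same rule, and the pricer gets $A_1:=\items\setminus A_2$. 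The candidate PNE is again $((s_1,s_2),(s_2,s_1))$.

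For claim~(1) I would mirror \Cref{lem:equilibrium}. If the chooser reports $(s_2,s_1)$, then, because the price vector depends only on the pricer's report, any chooser deviation leaves the prices unchanged and merely changes which affordable set she ends up with; reporting $(s_2,s_1)$ delivers $\arg\max_{S:\ p(S)\le\en{2}}v_2(s,S)$, the best affordable set under her true value $v_2(s,\cdot)$, so she gains nothing. If the pricer reports $(s_1,s_2)$, the mechanism evaluates the true valuations, hence outputs a price vector attaining $\sup_{p\in\prices}v_1\big(s,\items\setminus S^\star(p)\big)$ (with $S^\star$ now formed from $v_2(s,\cdot)$), and since the chooser selects with the same tie-break rule the pricer's realized value equals that supremum; any pricer deviation produces some $p'\in\prices$, giving value $v_1\big(s,\items\setminus S^\star(p')\big)\le\sup_p v_1\big(s,\items\setminus S^\star(p)\big)$. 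Two points need care: that the supremum over $\prices$ is attained (it is enough that $S^\star(\cdot)$ takes finitely many values, so one may choose a maximizer appropriately), and that the rule used in the pricer's simulated optimization is literally the rule the chooser uses, so the simulated complement is the one delivered.

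For claim~(2) it suffices to take the chooser as the ``one agent'': in the equilibrium $v_2(s,A_2)=\max_{S:\ p^\star(S)\le\en{2}}v_2(s,S)\ge\min_{p\in\prices}\max_{S:\ p(S)\le\en{2}}v_2(s,S)=\aps{2}{\en{2}}{s}$, a maximum being at least the minimum of such maxima, and this uses nothing about $v_2$. For claim~(3), take the pricer as the ``other agent'' and suppose $v_1(s,\cdot)$ is XOS with $v_1(s,\items)>0$ (otherwise $\prop{1}{\en{1}}{s}=0$). Let $a\ge 0$ attain the XOS maximum of $v_1(s,\cdot)$ at $\items$, so $v_1(s,\items)=\sum_j a_j$ and $v_1(s,T)\ge\sum_{j\in T}a_j$ for all $T$, and set $\widehat p_j:=a_j/v_1(s,\items)\in\prices$. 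For any $S$ with $\widehat p(S)\le\en{2}$ we have $\sum_{j\in S}a_j\le\en{2}v_1(s,\items)$, hence
$$v_1(s,\items\setminus S)\ \ge\ \sum_{j\notin S}a_j\ =\ v_1(s,\items)-\sum_{j\in S}a_j\ \ge\ (1-\en{2})\,v_1(s,\items)\ =\ \en{1}\,v_1(s,\items)\ =\ \prop{1}{\en{1}}{s};$$
applying this with $S=S^\star(\widehat p)$ gives $\sup_p v_1(s,\items\setminus S^\star(p))\ge\prop{1}{\en{1}}{s}$, which by the equilibrium argument is the pricer's equilibrium value.

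The step I expect to be the real obstacle is strengthening claim~(2) to ``the \emph{pricer} also gets at least his APS for arbitrary monotone $v_1$'' -- this underlies the independent-additive remark that both agents get their APS and is what makes the mechanism meaningful for a non-XOS pricer. The natural attempt is to let the pricer use a minimizer $p^\dagger$ of $\min_{p}\max_{T:\ p(T)\le\en{1}}v_1(s,T)$ and to show that every $S$ with $p^\dagger(S)\le\en{2}$ has $v_1(s,\items\setminus S)\ge\aps{1}{\en{1}}{s}$, equivalently that every $R$ with $p^\dagger(R)\ge\en{1}$ has $v_1(s,R)\ge\aps{1}{\en{1}}{s}$. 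If not, $\items\setminus R$ is a $p^\dagger$-cheap transversal of the upward-closed family $\{T:v_1(s,T)\ge\aps{1}{\en{1}}{s}\}$, and the task is to convert this transversal into a price vector whose affordable-value-maximum is strictly below $\aps{1}{\en{1}}{s}$, contradicting optimality of $p^\dagger$; the crude ``make a single item expensive'' move breaks down when the transversal is large, so I would route the argument through the fractional-transversal/fractional-matching LP duality behind the definition of the APS. The companion impossibility for two subadditive agents I would obtain from a small explicit instance, in the spirit of the XOS example following \Cref{thm:identical}.
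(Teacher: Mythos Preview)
Your proof of the theorem itself is correct and matches the paper's: Mechanism~\ref{alg:pricechoose} is the mechanism you describe (the paper phrases the pricer's optimization over the set $\cT(r_1,b_1)=\bigcup_{p\in\prices}\cT(r_1,b_1,p)$ of all possible chooser responses rather than over $\prices$ directly, and handles ties by having the chooser prefer the designated set $T^*(r_1,b_1)$ rather than by a fixed global rule, but these devices are equivalent), and \Cref{lem:equilibrium-aps,lem:aps,obs:aps} are precisely your arguments for (1)--(3).

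Your final paragraph, however, rests on a misconception. The independent-additive corollary does not require showing that the pricer gets her APS for arbitrary monotone $v_1$; it follows immediately from the XOS case because for additive valuations $\prop{1}{\en{1}}{s}\ge\aps{1}{\en{1}}{s}$ \citep{BabaioffEF21}, so claim~(3) already delivers APS to an additive pricer. More importantly, the stronger statement you try to salvage is \emph{false} beyond XOS: the paper's subadditive impossibility (built from a set-cover valuation over $2^k-1$ items via \citep{EzraFNTW19}, not a small explicit instance) exhibits a valuation with $v(T)+v(\items\setminus T)=k$ for every $T$ while the APS at entitlement $\nicefrac{1}{2}$ is at least $k-2$; for $k\ge 6$ this forces $2\cdot\text{APS}>k$, so no allocation can give both agents their APS, and in particular no mechanism can guarantee the pricer her APS. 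Your LP-duality route would therefore run into a genuine obstruction rather than a technical one, and the subadditive separation is considerably more delicate than the four-item XOS example you have in mind.
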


We devise the IDV Price-\&-Choose mechanism (Mechanism \ref{alg:pricechoose}) to prove the above theorem. Following \Cref{def:mech}, in Mechanism \ref{alg:pricechoose}, the set of bids of agents $1,2$, are the set of signals of agents $2,1$ respectively (i.e., $B_1=S_{2}$, and $B_2=S_1$).

\begin{algorithm}
    \DontPrintSemicolon
    \SetAlgorithmName{Mechanism}{mechanism}{}
    \caption{IDV Price-\&-Choose\label{alg:pricechoose}}
    \KwData{Report $(r_1, b_1)$ of agent 1 (pricer) and $(r_2,b_2)$ of agent 2 (chooser).}
    \Fn{\pricer}{
        Let $\xi_2(r_1,b_1, p) := \underset{T \subseteq M }{\max}~v_2((r_1,b_1), T) \cdot \ind{p(T) \leq \en{2}}$\;
        Let $\cT(r_1,b_1, p) := \{T \subseteq M \mid  p(T) \leq \en{2} \wedge  v_2((r_1,b_1), T) = \xi_2(r_1,b_1, p)$\}\;   
        Let $\cT(r_1,b_1) := \bigcup_{p \in \prices}\cT(r_1,b_1, p) $\;
        Let $T^*(r_1,b_1) := \underset{T \in \cT(r_1,b_1)}{\arg\max}~v_1((r_1,b_1),\items\setminus T)$\;
        Let $p^* \in \prices$ be a price vector such that $T^*(r_1,b_1) \in \cT(r_1,b_1, p^*)$\;
        Offer items for prices $p^*$\;
    }
    \Fn{\chooser}{
        \If{$v_2((b_2,r_2),T^*(r_1,b_1)) = \underset{T \subseteq \items: p^*(T) \leq \en{2}}{\max} v_2((b_2,r_2), T)$}{
        Select $A_2 = T^*(r_1,b_1)$\;
        }
        \Else{
            Select an arbitrary set $A_2$ in $\underset{T \subseteq \items: p^*(T) \leq \en{2}}{\arg\max} v_2((b_2,r_2), T)$\;
        }
    }
    \KwResult{Return allocation $\cM((r_1,r_2),(b_1,b_2)) = (A_1, A_2)$, where $A_1 = \items \setminus A_2$.}
\end{algorithm}

\begin{restatable}{lemma}{lemequilibriumaps}\label{lem:equilibrium-aps}
    For Mechanism \ref{alg:pricechoose} and signal vector $s=(s_1,s_2) \in \allsignals$, report vector $ ((s_1,s_2),(s_2,s_1))$ is a PNE.
\end{restatable}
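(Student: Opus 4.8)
The plan is to follow the template of the proof of \Cref{lem:equilibrium}: fix the report profile $((s_1,s_2),(s_2,s_1))$ and rule out profitable unilateral deviations of the pricer (agent 1) and of the chooser (agent 2) separately. A preliminary observation that I would record first is that under this profile both procedures of Mechanism~\ref{alg:pricechoose} operate on the \emph{true} signal vector — the pricing step is run on $(r_1,b_1)=(s_1,s_2)=s$, and the choosing step evaluates $v_2$ at $(b_2,r_2)=(s_1,s_2)=s$ — and, by \Cref{def:equalibrium}, each agent measures his utility with respect to $r^{(i)}$, which here equals $s$ for both agents because the opponent reports his own signal truthfully. So the relevant value functions throughout are $v_1(s,\cdot)$ and $v_2(s,\cdot)$.

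For the chooser, the argument is short. The price vector $p^*$ is produced by the pricing step and depends only on the pricer's report, hence it is fixed under any report of agent 2. Every report of agent 2 yields, through the choosing step, a set of price at most $\en{2}$ (both branches return such a set); and the truthful report $(s_2,s_1)$ yields a set $A_2$ achieving $\max_{T\subseteq\items:\,p^*(T)\le\en{2}}v_2(s,T)$ (this maximum is attained since $2^\items$ is finite and $\emptyset$ is feasible, and both branches return a maximizer of $v_2(s,\cdot)$ over feasible sets). Since any deviation produces a feasible set, its value is at most $v_2(s,A_2)$, so the chooser cannot gain.

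For the pricer, I would first pin down what agent 1 obtains in the proposed equilibrium. Since $T^*(s_1,s_2)\in\cT(s_1,s_2,p^*)$ we have $v_2(s,T^*(s_1,s_2))=\xi_2(s_1,s_2,p^*)=\max_{T:\,p^*(T)\le\en{2}}v_2(s,T)$, so the \emph{if}-condition of the choosing step holds, the chooser takes $A_2=T^*(s_1,s_2)$, and agent 1 receives $\items\setminus T^*(s_1,s_2)$ with value $\max_{T\in\cT(s_1,s_2)}v_1(s,\items\setminus T)$ by definition of $T^*(s_1,s_2)$. For an arbitrary pricer deviation $(r_1',b_1')$ inducing a price vector $p'$, the key claim is that the set $A_2'$ the chooser ends up with lies in $\cT(s_1,s_2,p')\subseteq\cT(s_1,s_2)$: indeed $p'(A_2')\le\en{2}$, and $v_2(s,A_2')=\max_{T:\,p'(T)\le\en{2}}v_2(s,T)=\xi_2(s_1,s_2,p')$ (this is exactly the branch condition in the \emph{if} branch, and the defining property of $A_2'$ as a maximizer in the \emph{else} branch). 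Then the pricer's post-deviation value $v_1(s,\items\setminus A_2')$ is at most $\max_{T\in\cT(s_1,s_2)}v_1(s,\items\setminus T)$, i.e. at most what he already gets, so the deviation is not profitable, completing the proof.

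The main obstacle — more a bookkeeping hazard than a genuine difficulty — is keeping straight which reconstructed signal vector drives the chooser's selection: the choosing step evaluates $v_2$ at $(b_2,r_2)$, and the whole argument hinges on this being $s=(s_1,s_2)$ both under the equilibrium profile and under any unilateral pricer deviation (agent 2 does not move), which is precisely what forces the chooser's selected bundle into $\cT(s_1,s_2,p')\subseteq\cT(s_1,s_2)$ for the corresponding price vector $p'$. The side facts that $\cT(r_1,b_1)$ and the $\arg\max$ sets involved are nonempty and attained follow immediately from finiteness of $2^\items$ and feasibility of $\emptyset$.
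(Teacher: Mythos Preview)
Your proof is correct and follows essentially the same approach as the paper: both arguments show the chooser cannot gain because the price vector is fixed by the pricer's report and truthful reporting returns a $v_2(s,\cdot)$-maximizing feasible bundle, and that any pricer deviation produces a price vector $p'$ for which the chooser's selected set lies in $\cT(s_1,s_2,p')\subseteq\cT(s_1,s_2)$, so the pricer's value is bounded by $v_1(s,\items\setminus T^*(s_1,s_2))$. Your version is somewhat more careful than the paper's in making explicit that the \emph{if}-branch of the choosing step is triggered at the equilibrium profile (so $A_2=T^*(s_1,s_2)$) and in tracking which reconstructed signal vector each agent uses, but the underlying argument is identical.
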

\begin{proof}
    Let us assume agent 1, the pricer, reports $(s_1, s_2)$. Then, since agent 2, the chooser, cannot influence the way goods are priced through his report, he has no incentive to deviate as if reporting $(s_2,s_1)$, he will choose the subset $A_2$ that maximizes his value according to $(s_1,s_2)$, subject to his entitlement constraint (i.e., $p^*(A_2) \leq \en{2}$). 
    
    On the other hand, suppose agent 2 reports $(s_2, s_1)$. In this case, if agent 1 reports some $(r_1',b_1') \neq (s_1,s_2)$, this induces a price vector $\widetilde{p}$ such that agent 2 selects a set $\widetilde{T} \in \cT(s_1,s_2, \widetilde{p})$, and therefore in $\widetilde{T} \in \cT(s_1,s_2)$. Thus, by definition of $T^*(s_1,s_2)$, it holds that  $v_1((s_1,s_2),\items \setminus T^*(s_1,s_2)) \geq v_1((s_1,s_2),\items\setminus \widetilde{T})$, which concludes the proof.
\end{proof}

\begin{restatable}{lemma}{lemaps}\label{lem:aps}
    For Mechanism \ref{alg:pricechoose} and signal vector $s=(s_1,s_2)\in \allsignals$, in PNE $((s_1,s_2),(s_2,s_1))$, if the pricer's valuation is XOS, then she receives at least her PROP with respect to $s$.
\end{restatable}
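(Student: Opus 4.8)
The plan is to first pin down exactly which bundle the pricer receives in the equilibrium $((s_1,s_2),(s_2,s_1))$, and then to exhibit one price vector witnessing that the complement of the chooser's bundle is valuable enough for the pricer. For the first part, note that in this profile the pricer reports $(s_1,s_2)$, so the offered prices are the vector $p^*$ from Mechanism~\ref{alg:pricechoose} with $T^*(s_1,s_2)\in\cT(s_1,s_2,p^*)$; by the definition of $\cT(\cdot,\cdot,p^*)$ this means $p^*(T^*(s_1,s_2))\le\en{2}$ and $v_2(s,T^*(s_1,s_2))=\xi_2(s_1,s_2,p^*)=\max_{T\subseteq\items:\,p^*(T)\le\en{2}}v_2(s,T)$. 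Since the chooser reports $(s_2,s_1)$, he evaluates valuations with respect to $(b_2,r_2)=(s_1,s_2)=s$, so the \emph{if}-condition in the chooser's subroutine is satisfied and he selects $A_2=T^*(s_1,s_2)$. Hence the pricer obtains $A_1=\items\setminus T^*(s_1,s_2)$, and by the choice of $T^*(s_1,s_2)$ in the mechanism, $v_1(s,A_1)=\max_{T\in\cT(s_1,s_2)}v_1(s,\items\setminus T)$.

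It therefore suffices to exhibit a single set $T\in\cT(s_1,s_2)$ with $v_1(s,\items\setminus T)\ge\prop{1}{\en{1}}{s}=\en{1}\cdot v_1(s,\items)$. If $v_1(s,\items)=0$ this is immediate from non-negativity of valuations, so assume $v_1(s,\items)>0$. Here I would invoke the XOS assumption on $v_1$: there is a non-negative additive vector $a$ (one of the clauses defining $v_1(s,\cdot)$, namely one attaining the maximum at $\items$), where writing $a(T)=\sum_{j\in T}a_j$ we have $a(\items)=v_1(s,\items)>0$ and $v_1(s,T)\ge a(T)$ for every $T\subseteq\items$. Define the price vector $p$ by $p_j=a_j/a(\items)$ for each $j\in\items$; it is non-negative and its entries sum to $1$, so $p\in\prices$. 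Since $p(\emptyset)=0\le\en{2}$, the set $\cT(s_1,s_2,p)$ is non-empty, so fix any $T\in\cT(s_1,s_2,p)\subseteq\cT(s_1,s_2)$. By feasibility $p(T)\le\en{2}$, i.e.\ $a(T)\le\en{2}\cdot a(\items)$, and since $\en{1}+\en{2}=1$ we obtain $a(\items\setminus T)=a(\items)-a(T)\ge\en{1}\cdot a(\items)$. Therefore $v_1(s,\items\setminus T)\ge a(\items\setminus T)\ge\en{1}\cdot a(\items)=\en{1}\cdot v_1(s,\items)=\prop{1}{\en{1}}{s}$. Combining with the identity from the first paragraph, $v_1(s,A_1)=\max_{T'\in\cT(s_1,s_2)}v_1(s,\items\setminus T')\ge v_1(s,\items\setminus T)\ge\prop{1}{\en{1}}{s}$, as required.

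The argument is essentially a routing-through of the relevant definitions together with the standard supporting-additive-function characterization of XOS, so I do not anticipate a genuine obstacle. The one place that requires a bit of care is the bookkeeping in the first step: verifying that, when both agents report as prescribed, the chooser's tie-breaking rule indeed forces $A_2=T^*(s_1,s_2)$, so that the pricer receives precisely the complement $\items\setminus T^*(s_1,s_2)$ that Mechanism~\ref{alg:pricechoose} optimized over. Note also that XOS is used only to pass from $v_1(s,\items\setminus T)$ to the additive surrogate $a(\items\setminus T)$; for a general monotone, or even merely subadditive, $v_1$ this inequality need not hold, consistent with the impossibility for subadditive valuations stated later in the section.
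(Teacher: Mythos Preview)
Your proposal is correct and follows essentially the same approach as the paper: both use the supporting additive clause of the XOS valuation at $\items$ to define a price vector proportional to it, pick an arbitrary $T\in\cT(s_1,s_2,p)$, and use the budget constraint $p(T)\le\en{2}$ together with $\en{1}+\en{2}=1$ to lower-bound $v_1(s,\items\setminus T)$ by $\en{1}\cdot v_1(s,\items)$. The only difference is that you are more explicit in the first paragraph about why $A_1=\items\setminus T^*(s_1,s_2)$ in this equilibrium (via the chooser's tie-breaking), whereas the paper leaves this implicit and jumps directly to the pricing argument.
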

\begin{proof}
    Since $v_1$ is XOS, then there exist an integer $\ell$ and $\ell$ non-negative vectors $a^1,\ldots,a^\ell$ of dimension $|\items|$, such that for every subset $T\subseteq \items$ it holds that  $v_i(s,T)=\max_{\ell' \in \{1,\ldots,\ell\}}\sum_{j \in T} a^{\ell'}_j$.
    Let $\ell^*$ be an arbitrary index maximizing $\sum_{j\in \items} a^{\ell^*}_j$, and let $\delta = \sum_{j\in \items} a^{\ell^*}_j$.
    
    If $\delta=0$, then PROP of agent $1$ with respect to signal $s$ is $0$, thus any allocation is PROP for agent $1$ with respect to signal $s$.
    Else, consider the pricing $\widetilde{p}$, where item $j\in\items$ is priced $\widetilde{p}_j=\nicefrac{a^{\ell^*}_j}{\delta}$. Let $\widetilde{T}$ be an arbitrary set in  $\cT(s_1,s_2, \widetilde{p})$. It holds that  
    \begin{align*}
        v_1(s,\items \setminus \widetilde{T}) &\geq \sum_{j \in \items \setminus \widetilde{T}} a^{\ell^*}_j = \delta -\sum_{j \in  \widetilde{T}} a^{\ell^*}_j  = \delta \cdot \left(1-\sum_{j \in  \widetilde{T}} \widetilde{p}_j\right) \\
        &\geq \delta \cdot \left(1-\en{2}\right) = \delta \cdot \en{1} =\prop{1}{\en{1}}{s},
    \end{align*}
    where the first inequality is by the definition of XOS function, and the second inequality is since  $\sum_{j \in \widetilde{T}}  \widetilde{p}_j \leq \en{2}$.
    This concludes the proof.    
\end{proof}

Since the chooser selects the bundle with the highest value according to his report subject to his entitlement, we can observe the following:
\begin{observation}\label{obs:aps}
     For Mechanism \ref{alg:pricechoose} and signal vector $s=(s_1,s_2)\in \allsignals$, in equilibrium $((s_1,s_2),(s_2,s_1))$, the allocation $\mech((s_1,s_2),(s_2,s_1))$ is APS for the chooser with respect to $s$.
\end{observation}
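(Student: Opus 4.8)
The plan is to unwind the mechanism's behaviour along the equilibrium report and then read off the conclusion directly from the min–max definition of APS in \Cref{def:share}. When agent~$1$ (the pricer) reports $(r_1,b_1)=(s_1,s_2)$, the Pricer subroutine of Mechanism~\ref{alg:pricechoose} offers a price vector $p^*\in\prices$ chosen so that $T^*(s_1,s_2)\in\cT(s_1,s_2,p^*)$; in particular $p^*$ is a \emph{legitimate} member of $\prices$ (it sums to $1$) and $p^*(T^*(s_1,s_2))\leq\en{2}$. When agent~$2$ (the chooser) reports $(r_2,b_2)=(s_2,s_1)$, we have $(b_2,r_2)=(s_1,s_2)=s$, so throughout the Chooser subroutine agent~$2$ evaluates bundles according to her \emph{true} valuation $v_2(s,\cdot)$. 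These two bookkeeping observations are all that the equilibrium hypothesis is used for.

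The core of the argument is that, in \emph{both} branches of the Chooser subroutine, the selected bundle $A_2$ satisfies $p^*(A_2)\leq\en{2}$ and $v_2(s,A_2)=\max_{T\subseteq\items:\,p^*(T)\leq\en{2}}v_2(s,T)$. In the \emph{else} branch this holds by construction. In the \emph{if} branch $A_2=T^*(s_1,s_2)$, which is $p^*$-affordable because $T^*(s_1,s_2)\in\cT(s_1,s_2,p^*)$, and which attains the maximum precisely because that equality is the condition guarding the branch. Hence
\[
v_2(s,A_2)\;=\;\max_{T\subseteq\items:\,p^*(T)\leq\en{2}}v_2(s,T)\;\geq\;\min_{p\in\prices}\;\max_{T\subseteq\items:\,p(T)\leq\en{2}}v_2(s,T)\;=\;\aps{2}{\en{2}}{s},
\]
where the inequality uses $p^*\in\prices$ and the last equality is the definition of APS. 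This is exactly the statement that $\mech((s_1,s_2),(s_2,s_1))$ is APS for the chooser with respect to $s$.

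The only points needing any care are minor: (i) one should check that $\cT(s_1,s_2)\neq\emptyset$ so that $T^*(s_1,s_2)$ and $p^*$ are well defined — this is immediate since $\emptyset$ is affordable under every $p\in\prices$, so each $\cT(s_1,s_2,p)$ is non-empty; and (ii) one should observe that being a PNE is not really used beyond fixing the reports, as the argument in fact shows more generally that whenever the pricer's offered prices lie in $\prices$ and the chooser correctly evaluates with $v_2(s,\cdot)$, the chooser's bundle is worth at least her APS. I anticipate no substantive obstacle here; the observation is essentially a translation of the min–over–price-vectors structure of APS into the language of the mechanism.
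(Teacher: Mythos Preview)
Your argument is correct and matches the paper's own reasoning: the paper treats this as an observation, justifying it only by the sentence ``Since the chooser selects the bundle with the highest value according to his report subject to his entitlement, we can observe the following.'' Your write-up simply unpacks this into the explicit chain $v_2(s,A_2)=\max_{T:\,p^*(T)\le\en{2}}v_2(s,T)\ge\min_{p\in\prices}\max_{T:\,p(T)\le\en{2}}v_2(s,T)=\aps{2}{\en{2}}{s}$, which is exactly the intended content.
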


The proof of \Cref{thm:different} follows by combining \Cref{lem:equilibrium-aps,lem:aps,obs:aps}.

\subsection{Discussion on Mechanism~\ref{alg:pricechoose}}
We showed that Mechanism~\ref{alg:pricechoose} under XOS valuations gives in equilibrium $((s_1,s_2),(s_2,s_1))$ the pricer her PROP, and the chooser the APS with respect to the true signals.
This implies that in the independent values model, for two XOS agents there is always an allocation that gives one agent her proportional share, and the other his APS.
In the special case where the pricer's valuation is additive, since PROP is at least as large as the APS \citep{BabaioffEF21}, this guarantees the APS for both agents. Moreover, for our Price-\&-Choose mechanism when applied to independent additive valuation functions, all equilibria are APS fair with respect to the true values of the agents. Thus, we have the following corollary:
\begin{corollary}\label{cor:aps-independent}
    When applied to the independent values model  (see \Cref{remark:independent}) for additive agents, Mechanism~\ref{alg:pricechoose} guarantees that in every equilibrium, both agents receive their APS.
\end{corollary}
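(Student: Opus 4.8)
The plan is to exploit that in the independent values model of \Cref{remark:independent} all interdependence disappears: by \Cref{def:equalibrium}, for any report profile agent $i$'s payoff reduces to $u_i(s_i,\cdot)$ evaluated on the bundle she is assigned, i.e.\ her true additive value for that bundle. Likewise, inside Mechanism~\ref{alg:pricechoose} the chooser's valuation used by the mechanism depends only on his own signal report $r_2$, and the pricer's perceived value for the chooser depends only on the pricer's guess $b_1$. It therefore suffices to prove, for \emph{every} PNE with resulting allocation $(A_1,A_2)$, that (a) $u_2(s_2,A_2)\ge \aps{2}{\en{2}}{s}$ and (b) $u_1(s_1,A_1)\ge \prop{1}{\en{1}}{s}$; the corollary then follows from (a), (b) and the fact that $\prop{1}{\en{1}}{s}\ge\aps{1}{\en{1}}{s}$ for additive valuations \citep{BabaioffEF21}.

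For part (a), fix a PNE and let $p^*$ be the price vector induced by the pricer's report. The set $A_2$ assigned to the chooser is always feasible, $p^*(A_2)\le\en{2}$, so $u_2(s_2,A_2)\le\max_{T:\,p^*(T)\le\en{2}}u_2(s_2,T)$. Now I would test the deviation in which agent $2$ keeps his guess but reports his true signal $s_2$: then the chooser routine runs with valuation $u_2(s_2,\cdot)$, so whichever of the two branches is taken the selected set attains exactly $\max_{T:\,p^*(T)\le\en{2}}u_2(s_2,T)$, which is $\ge\aps{2}{\en{2}}{s}$ since $p^*$ is one particular vector in $\prices$. PNE optimality for agent $2$ then forces $u_2(s_2,A_2)\ge\aps{2}{\en{2}}{s}$ (this half does not use additivity).

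For part (b), I would test, in a PNE, the deviation in which agent $1$ reports $(s_1,r_2)$, where $r_2$ is the chooser's current signal report (a legitimate deviation to test for profitability, even though agent $1$ need not ``know'' $r_2$). Under this report the pricer routine runs with $v_1=u_1(s_1,\cdot)$ and perceived chooser valuation $u_2(r_2,\cdot)$, producing $T^*(s_1,r_2)\in\argmax_{T\in\cT(s_1,r_2)}u_1(s_1,\items\setminus T)$ together with a price vector $p^{**}$ for which $T^*(s_1,r_2)\in\cT((s_1,r_2),p^{**})$, i.e.\ $p^{**}(T^*(s_1,r_2))\le\en{2}$ and $T^*(s_1,r_2)$ maximizes $u_2(r_2,\cdot)$ over feasible sets. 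Since the chooser's reported valuation is exactly $u_2(r_2,\cdot)$, the \emph{if} branch of the chooser routine fires and he accepts $A_2=T^*(s_1,r_2)$, so agent $1$ would receive $\items\setminus T^*(s_1,r_2)$, of value $\max_{T\in\cT(s_1,r_2)}u_1(s_1,\items\setminus T)$. To lower-bound this, assume $u_1(s_1,\items)>0$ (otherwise $\prop{1}{\en{1}}{s}=0$ and there is nothing to prove) and evaluate at the proportional price vector $\widehat p_j=u_1(s_1,\{j\})/u_1(s_1,\items)\in\prices$: the set $\cT((s_1,r_2),\widehat p)$ is nonempty (the max defining it is attained over the finite nonempty family $\{T:\widehat p(T)\le\en{2}\}$), so picking any $T_0$ in it we get $\widehat p(T_0)\le\en{2}$ and, by additivity, $u_1(s_1,\items\setminus T_0)=u_1(s_1,\items)\bigl(1-\widehat p(T_0)\bigr)\ge\en{1}\,u_1(s_1,\items)=\prop{1}{\en{1}}{s}$. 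As $T_0\in\cT(s_1,r_2)$, the deviation gives agent $1$ value at least $\prop{1}{\en{1}}{s}$, so PNE optimality yields $u_1(s_1,A_1)\ge\prop{1}{\en{1}}{s}\ge\aps{1}{\en{1}}{s}$.

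I expect the delicate point to be the first half of part (b): arguing that the probe report $(s_1,r_2)$ \emph{forces} the chooser to accept $T^*(s_1,r_2)$ — so that agent $1$ obtains precisely its complement rather than some other feasible $u_2(r_2,\cdot)$-maximizer the chooser might otherwise select — which relies on setting the guess coordinate to $r_2$ so that the pricer's internal model of the chooser coincides with the chooser's actual reported valuation, together with the mechanism choosing $p^{**}$ exactly so that $T^*(s_1,r_2)\in\cT((s_1,r_2),p^{**})$. Everything else is bookkeeping with the definitions of $\cT(\cdot)$, $\prices$, and the APS.
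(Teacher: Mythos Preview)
Your proof is correct. The paper does not spell out a proof of \Cref{cor:aps-independent}; it only asserts, in the sentence preceding the corollary, that ``for our Price-\&-Choose mechanism when applied to independent additive valuation functions, all equilibria are APS fair,'' leaving the verification to the reader. Your argument supplies precisely that verification and follows the natural line: in any PNE, test for each agent a deviation that would secure her share, and conclude by PNE optimality. Part (a) is immediate (the chooser can always report his true signal and grab an optimal feasible set under the posted prices). For part (b) you correctly identify the one genuinely nontrivial step, which the paper's terse statement hides: the pricer's probe report must set her guess coordinate to the chooser's \emph{current} signal report $r_2$, so that the pricer routine's internal model $v_2((s_1,r_2),\cdot)=u_2(r_2,\cdot)$ coincides with the valuation the chooser routine actually uses, guaranteeing via the \emph{if} branch that the chooser accepts exactly $T^*(s_1,r_2)$ and the pricer receives its complement; the proportional pricing then gives the PROP lower bound, and $\text{PROP}\ge\text{APS}$ for additive valuations finishes it. This is exactly the argument implicit in the paper's treatment of the truthful equilibrium (\Cref{lem:equilibrium-aps,lem:aps,obs:aps}), adapted to an arbitrary equilibrium by exploiting independence.
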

Giving both agents their proportional share is trivially impossible even for the simple case of additive valuations (e.g., when there is a single good).
We next show that there are subadditive instances with two agents (even in the non-interdependent case, with no incentive constraints) for which there is no allocation that gives one agent her APS and the other her PROP. This is true even for agents with equal entitlements and same valuation.
\begin{restatable}{proposition}{propsubadd}
    There exists a subadditive valuation $v:2^\items\rightarrow \R_{\geq 0}$, such that for every set $T\subseteq \items$, either $v(T) < \nicefrac{v(\items)}{2} =\text{PROP}$ or $v(\items \setminus T) <\min_{p\in \prices} \max_{T'\subseteq \items } v(T') \cdot \ind{\sum_{j\in T'}p_j \leq \nicefrac{1}{2}}=\text{APS}$. 
\end{restatable}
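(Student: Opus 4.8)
The plan is to exhibit a single subadditive valuation $v$ on a carefully chosen ground set and verify the dichotomy by hand; the ground set will be the Fano plane. Concretely, take $\items$ to be the $7$ points of the Fano plane and call its $7$ three‑element lines simply \emph{lines}, recalling that every point lies on exactly $3$ lines and any two distinct lines meet in exactly one point. Define $v(\emptyset)=0$; $v(T)=1$ if $T\neq\emptyset$ and $T$ contains no line; $v(T)=2$ if $T$ contains a line and $T\neq\items$; and $v(\items)=3$. Reading the statement through an allocation $(A_1,A_2)=(T,\items\setminus T)$, this $v$ is then a single‑valuation, equal‑entitlement instance in which no allocation gives one agent at least her APS and the other at least his PROP.

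The first step is to check $v$ is monotone and subadditive. Monotonicity is immediate, since ``$T$ contains a line'' is upward closed and $v(\items)=3$ is the maximal value. For subadditivity, $v(A)+v(B)\ge v(A\cup B)$ is trivial whenever $v(A\cup B)\le 2$ (any two nonempty sets have values summing to at least $2$, and the empty‑set cases are clear). The only real case is $v(A\cup B)=3$, i.e.\ $A\cup B=\items$, where I would show at least one of $A,B$ contains a line, so $v(A)+v(B)\ge 2+1$. If, say, $A$ is a nonempty line‑free set, then $|A|\le 4$ and $B\supseteq\items\setminus A$; distinguishing $|A|\le 2$ (then $|\items\setminus A|\ge 5$, and every $5$‑subset of the Fano plane contains a line), $|A|=3$ (a non‑collinear triple, whose $4$‑point complement contains a line), and $|A|=4$ (a line‑free $4$‑set is the complement of a line $\ell$, so $\items\setminus A=\ell$), in every case $B$ must contain a line.

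The second step computes the two shares. We have $v(\items)=3$, hence $\text{PROP}=v(\items)/2=3/2$. For $\text{APS}=\min_{p\in\prices}\max_{T:\,p(T)\le 1/2}v(T)$: since $\items$ is the unique set of value $3$ and $p(\items)=1>1/2$ for all $p\in\prices$, the inner maximum never exceeds $2$, so $\text{APS}\le 2$. Conversely, fix any $p\in\prices$ and sum over the $7$ lines: $\sum_{\ell}p(\ell)=\sum_{j\in\items}p_j\cdot|\{\ell:j\in\ell\}|=3\sum_j p_j=3$, so some line $\ell^\star$ satisfies $p(\ell^\star)\le 3/7<1/2$ and is affordable with $v(\ell^\star)=2$; hence the inner maximum is at least $2$, and $\text{APS}=2$.

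The final step is the dichotomy. Let $T\subseteq\items$ with $v(T)\ge\text{PROP}=3/2$; then $v(T)\in\{2,3\}$, so $T$ contains a line $\ell$, whence $\items\setminus T\subseteq\items\setminus\ell$. Because every line meets $\ell$, no line lies inside $\items\setminus\ell$, so $\items\setminus T$ is line‑free and $v(\items\setminus T)\le 1<2=\text{APS}$. Thus every $T$ satisfies $v(T)<\text{PROP}$ or $v(\items\setminus T)<\text{APS}$, which is the claim. The step I expect to demand the most care is the subadditivity verification in the case $A\cup B=\items$: that is exactly where the intersecting structure of the Fano lines (two lines always meet, the complement of a line is line‑free, the complement of a non‑collinear triple is not) is indispensable; the computations of PROP and APS and the closing argument are routine once the construction is fixed.
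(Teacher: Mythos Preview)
Your proof is correct and takes a genuinely different route from the paper. The paper invokes a construction of \citet{EzraFNTW19}: the ground set is $\{0,1\}^k\setminus\{0\}$ for even $k\ge 6$, one first defines a set-cover function $g$ over the ``hyperplanes'' $B_u=\{j:\langle j,u\rangle=0\}$ and then a symmetrized valuation $v$ with $v(T)+v(\items\setminus T)=k$ for all $T$; an averaging argument over the $B_u$'s shows $\text{APS}\ge k-2$, so $\text{PROP}+\text{APS}\ge k/2+(k-2)>k$, which forces the dichotomy. Your Fano-plane example is essentially the $k=3$ projective geometry, but with a much simpler hand-built valuation (only four values) for which monotonicity, subadditivity, $\text{PROP}=3/2$, and $\text{APS}=2$ can all be checked directly; the averaging step (each point lies on the same number of lines, so some line is affordable) is the same idea in miniature. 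What your approach buys is a concrete minimal witness---seven items, no external lemmas---whereas the paper's construction yields an infinite family and reuses existing machinery. One small remark: in the subadditivity case $A\cup B=\items$ you should also explicitly dispose of the possibility that one of $A,B$ equals $\items$ (then $v(A)+v(B)\ge 3$ trivially), but this is immediate.
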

\begin{proof}
    Our construction of $v$ is based on a similar construction from \citep[Section 3]{EzraFNTW19}.
    Consider the set of items $\items$ of all non-zero binary vectors of size $k$ for some even $k\geq 6$ (i.e., $\items=\{0,1\}^k \setminus \{(0,\ldots,0)\}$).
    For a vector $u\in \{0,1\}^k$, we denote by $B_u=\{j \mid \langle j,u \rangle =0\}$ (and the inner product is defined over $\F_2$).
    Let $g(T):2^\items \rightarrow \R_{\geq 0}$ be a set cover function where the collection of sets is $\{B_u \mid u\in \{0,1\}^k\}$, i.e., $$g(T) =\min\left\{ \ell  \mid \exists u_1,\ldots ,u_\ell\in\{0,1\}^k: T \subseteq \bigcup_{\ell'=1}^\ell B_{\ell'}\right\} .$$
    Note that $g$ is monotone and subadditive since it is a set cover function.
    We also observe that $g(\items) = k$.
    This is true since  for every collection of vectors $u_1,\ldots,u_{k-1}\in \items$, there exists a vector $u$ such that $\langle u,u_\ell \rangle =1$, for every $\ell=1,\ldots, k-1$, and therefore $u\notin \bigcup_{\ell=1}^{k-1} B_{u_\ell} $, which implies that $\bigcup_{\ell=1}^{k-1} B_{\ell} \neq \items$. 
    We now define the following function 
    \begin{equation*}
        v(T) = \begin{cases}
            g(T) \quad &\text{ if } g(T) < \frac{k}{2} \\
            k-g(\items\setminus T) \quad &\text{ if } g(\items \setminus T) < \frac{k}{2} \\
           k/2 \quad &\text{ else}
      \end{cases}
    \end{equation*}
    \citet[Lemma 3.2, Lemma 3.3]{EzraFNTW19} show that for a set cover function $g$ such that $g(\items)\geq k$, the function $v$ possesses the following properties: (1) $v$ is well-defined (it cannot occur that both $g(T), g(M \setminus T) < \nicefrac{k}{2}$ as otherwise there would exist a collection of $k-1$ sets $B_u$'s that cover the entire goods set $\items$), and hence the cases that define $v$ are disjoint; (2) $v$ is monotone and subadditive; (3) for all $T\subseteq \items $, $v(T) + v(\items \setminus T) = k$.

    We next show that the APS of $v$ with entitlement $\nicefrac{1}{2}$ is at least $k-2$, that is $$\min_{p\in \prices} \max_{T'\subseteq \items } v(T') \cdot \ind{\sum_{j\in T'}p_j \leq \frac{1}{2}} \geq k-2.$$ Consider an arbitrary prices $p\in \prices$. To this end, assume towards contradiction that for all non-zero vectors $u \in \items $, we have $p(\items \setminus B_u) > \nicefrac{|\items \setminus B_u|}{|\items|}$. Then, $\sum_{u \in \items} p(\items \setminus B_u) > |\items \setminus B_u| = \frac{m+1}{2}$. On the other hand, we also have that
    \begin{align*}
        \sum_{u \in \items} p(\items \setminus B_u) &= \sum_{u \in \items}\sum_{j \in \items} p(\{j\}) \cdot \ind{j \in \items \setminus B_u} \\
        &= \sum_{j \in \items} p(\{j\}) \cdot \sum_{u \in \items} \ind{j \in \items \setminus B_u} = \sum_{j \in \items} p(\{j\}) \cdot \frac{m+1}{2} = \frac{m+1}{2},
    \end{align*}
    where the second to last equality follows from the fact that any good $j$ appears in exactly $\nicefrac{(m-1)}{2}$ $B_u$'s (excluding the case where the inner product is 0) and the last equality is since $\sum_{j \in \items} p(\{j\}) = 1$. A contradiction. 
    
    Thus, there exists a good $u^*$ such that $p(\items \setminus B_{u^*}) \leq\nicefrac{|\items \setminus B_{u^*}|}{m} =\nicefrac{(m+1)}{2m} $. So there must exist a good $j \in \items \setminus B_{u^*}$ such that $p(\items \setminus (B_{u^*} \cup \{j\})) \leq \nicefrac{1}{2}$. Hence, $v(\items \setminus (B_{u^*} \cup \{j\})) = k-2$, where the last equality holds since $g(B_{u^*} \cup \{j\}) = 2$ (covering $B_{u^*}$ costs 1 and covering singleton $\{j\}$ also costs 1). Thus, for every price vector $p$, there exists a bundle with a price of at most $\nicefrac{1}{2}$ with a value of $k-2$, and therefore the APS is at least $k-2$.

    Since $v(\items) =k $, then PROP is $\nicefrac{k}{2}$. The proposition then holds since for every set $T\subseteq \items$, $$v(T)+v(\items\setminus T) = k <   \nicefrac{k}{2}+k-2\leq \text{PROP} + \text{APS},$$ which finishes the proof.    
\end{proof}
\section{The Case of 3 or More Agents}\label{sec:multi}

In this section we show how for the case of $n\geq 3$ agents, we can reduce the problem of designing a mechanism that has at least one fair equilibrium with respect to the true signal $s$, to the purely algorithmic problem. On a high level, we do so as follows: Let each agent report her signal as well as everybody else's. Since $n \geq 3$, if at least $n-1$ (unanimity up to one vote) agents agree on their reports and guesses, we can execute the algorithm with respect to the reported declarations, excluding the agent that did not agree (if any). This implies that when all the agents bid the same, the produced allocation is identical to the one where all the agents report as before, and just one agent deviates to something different\footnote{It is easy to see that this technique cannot be implemented when there are only two agents, which makes it more challenging.}. The latter is crucial to guarantee a PNE with the algorithm's properties.

To formalize this reduction, consider any algorithm $\mathcal{F}:V_1\times,\ldots,V_n \rightarrow \allocations$ that receives for every agent $i$ an (independent) valuation $v_i:2^\items \rightarrow \R_{\geq 0} $ from a set of valuations $ V_i$, and returns an allocation that satisfies some property $X$ (where property $X$ can be EF, EFX, EF1, PROP, MMS, APS, and even non-fairness properties such as maximizing social welfare, or Pareto optimality). Then we prove the following:

\begin{restatable}{theorem}{thmblackbox}\label{thm:blackbox}
For every property $X$, every algorithm $\mathcal{F}:V_1\times \ldots \times V_n \rightarrow \allocations$ that always satisfies property X, and every instance $\inst=(\agents,\items,\alpha,\allsignals,v_1,\ldots,v_n)$ with $n \geq 3$ agents, for which for all $i$, and every signal vector  $s\in\allsignals$, it holds that $v_i(s,\cdot) \in V_i$,
there exists an allocation mechanism $\mech$ such that for every signal vector $s\in \allsignals$ there is at least one PNE,   $((r_1,b_1),\ldots,(r_n,b_n))$, that satisfies that the allocation $\mech((r_1,\ldots,r_n),(b_1,\ldots,b_n))$ satisfies property $X$ with respect to $s$. 
\end{restatable}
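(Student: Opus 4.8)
The plan is to construct the mechanism $\mech$ as described in the section's high-level sketch: each agent $i$ reports a full signal profile, i.e., a claimed value $r_i \in S_i$ for her own signal together with a guess $b_i = (b_i^{(1)}, \dots, b_i^{(n)})$ of the entire signal vector (so here the bid space $B_i$ is essentially $\bigtimes_{k} S_k$). Given all reports, the mechanism checks whether there is a ``near-unanimous'' profile: a signal vector $\sigma \in \allsignals$ such that all but at most one agent submitted reports consistent with $\sigma$ (meaning agent $j$'s own report $r_j$ equals $\sigma_j$ and agent $j$'s guess equals $\sigma$). If such a $\sigma$ exists (it is unique when $n \ge 3$, since two disagreeing agents already break the $n-1$ threshold), the mechanism runs $\mathcal F$ on the valuation profile $(v_1(\sigma,\cdot),\dots,v_n(\sigma,\cdot))$ to get an allocation, and outputs it; if no such $\sigma$ exists, it outputs an arbitrary fixed allocation. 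The well-definedness (uniqueness of $\sigma$) is the reason $n\ge3$ is needed and should be stated as a small lemma.

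Next I would exhibit the candidate equilibrium: every agent reports truthfully and guesses the true signal vector, i.e., $r_i = s_i$ and $b_i = s$ for all $i$. Under this profile the near-unanimous vector is $\sigma = s$, so $\mech$ outputs $\mathcal F(v_1(s,\cdot),\dots,v_n(s,\cdot))$, which satisfies property $X$ with respect to $s$ by assumption on $\mathcal F$ (using that $v_i(s,\cdot) \in V_i$). So conditions (the allocation has property $X$) hold provided this is indeed a PNE.

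The crux is verifying the PNE condition. Fix agent $i$ and suppose all other agents play their truthful-and-correct-guess strategy. Any unilateral deviation by agent $i$ changes only $i$'s own report and guess. But since the remaining $n-1$ agents still all agree on $s$, the near-unanimous vector is \emph{still} $\sigma = s$ regardless of what $i$ submits (agent $i$ is the ``at most one'' dissenter that the rule tolerates). Hence the mechanism still outputs the exact same allocation $\mathcal F(v_1(s,\cdot),\dots,v_n(s,\cdot))$. Therefore $i$'s value $v_i(r^{(i)},\mech_i(\cdot)) = v_i((s_i,s_{-i}), \mech_i(\cdot))$ is unchanged by the deviation (note $r^{(i)}$ plugs in $i$'s true signal $s_i$ for the valuation, and $r_{-i}=s_{-i}$), so no deviation is strictly improving. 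This establishes the PNE property and completes the proof.

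The main obstacle — really the only delicate point — is making the ``tolerate one dissenter'' rule precise enough that a single deviator genuinely cannot shift the identified vector $\sigma$, while also ensuring $\sigma$ is well-defined (unique) whenever it exists. With $n \ge 3$: if the mechanism identifies $\sigma$ as ``the signal vector agreed upon by at least $n-1$ of the submitted full-profiles,'' then two distinct candidates $\sigma \ne \sigma'$ would each need $\ge n-1$ supporters, forcing (by inclusion–exclusion, since $(n-1)+(n-1) > n$ when $n\ge3$) a common supporter claiming both, a contradiction; and a lone deviator cannot drop the support of the true $s$ below $n-1$. I would phrase $\mech$'s tie-breaking/failure behavior (output a fixed allocation when no $\sigma$ has $\ge n-1$ support) carefully so these two facts are immediate, and the rest is the short argument above. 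I would also remark, as the paper's footnote anticipates, that this construction provably fails for $n=2$ since there a single deviator is half the electorate and can always force disagreement.
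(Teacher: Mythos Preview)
Your proposal is correct and follows essentially the same approach as the paper: the paper constructs the identical ``near-unanimity'' mechanism (run $\mathcal{F}$ on the signal profile agreed upon by at least $n-1$ reporters, else output a fixed allocation), exhibits the same truthful-and-correct-guess profile as the PNE, and verifies it by noting a lone deviator cannot change the $n-1$ consensus. Your uniqueness argument for $\sigma$ is slightly more explicit than the paper's one-line remark, but otherwise the proofs match.
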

\begin{proof}
To prove the theorem, we devise the following mechanism, for which the set of bids of agent $i$, is the product of the set of signals of all agents, i.e., $B_i=\allsignals $, and $B = B_1\times \ldots \times B_n$.
We denote by $b_{i,j}$ the $j^{\text{th}}$ coordinate of vector $b_i$.
\begin{algorithm}
    \DontPrintSemicolon
    \SetAlgorithmName{Mechanism}{mechanism}{}
    \caption{IDV Black Box Transformation from $\mathcal{F}$ to $\cM$}\label{alg:blackbox}
    \KwData{Report vector $((r_1,b_1), (r_2,b_2), \ldots (r_n,b_n))$.}
        \If{ \normalfont there exists  a set $\agents' \subseteq \agents$ of size at least $n-1$, such that 1) for every $i,i'\in \agents'$, we have $ b_i=b_{i'}$, and 2) for every $i\in \agents'$ it holds that $r_i=b_{i,i}$ }{
        Select an arbitrary $i^*\in \agents'$
           Set $\cM(r,b)=\mathcal{F}(v_1(b_{i^*},\cdot),\ldots,v_n(b_{i^*},\cdot))$\;
          }
          \Else{Set $\cM(r,b)=A=(A_1, A_2,\ldots, A_n)$, where $A$ is some predefined allocation\;
          }
    \KwResult{Return allocation $\cM(r,b)$}
\end{algorithm} 

The mechanism is well defined since because $n\geq 3$, there cannot be more than one value of $b_{i^*}$ that has $n-1$ agents that report it.

Consider the report vector $(s_1,s),\ldots,(s_n,s)$. It is clear that since more than $n-1$ agents agree on the value of $b_i$ (since all of them are $s$), and for every agent $i$,  $r_i=s_i=b_{i,i}$, then the output of Mechanism \ref{alg:blackbox} is $\mathcal{F}(v_1(s,\cdot),\ldots,v_n(s,\cdot))$, which satisfies all properties that $\mathcal{F}$ satisfies with respect to signal vector $s$. 
    Therefore, the only thing that remains to show, is that the report vector $(s_1,s),\ldots,(s_n,s)$ is a PNE. This follows directly from the construction of Mechanism \ref{alg:blackbox}, as in the case that one agent deviates, the produced allocation remains the same.
\end{proof}

\section{Impossibilities}\label{sec:main_imp}

So far, we considered allocation mechanisms that for every signal vector $s\in \allsignals$, have at least one PNE, and the allocation that corresponds to one of these PNE is fair (according to some fairness criterion) with respect to the true signals $s$. 
The next natural direction is to explore whether there are mechanisms that always admit a PNE for every signal vector $s$, and \textit{all} equilibria are fair with respect to the true signal vector $s$. 

In the following theorem, we consider the notions of MMS and EF1, and we show that this is impossible. Notice that this creates a separation with the strategic model of non-interdependent values, where it is known that there are allocation mechanisms with this property under agents with additive valuation functions, both for the case of MMS (2 agents), and the case of EF1 ($n$ agents). 

\begin{restatable}{theorem}{thmcounter}\label{thm:counter}
    There is no IDV-allocation mechanism that for every signal vector $s\in \allsignals$ has a PNE, and all of its equilibria induce MMS (or EF1) allocations with respect to the true signal vector $s$.
Moreover, this holds even for the special case where all valuation functions are additive dichotomous (where each item has a value of either $0$ or $1$).
\end{restatable}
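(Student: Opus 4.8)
The plan is to construct two instances with two additive dichotomous agents that any allocation mechanism cannot separate, and exploit the fact that both agents' values are driven by a shared signal. Following the spirit of \Cref{example}, I would let agent $1$'s value for good $j$ equal his own signal bit $s_{1j}\in\{0,1\}$, and agent $2$'s value for good $j$ also equal $s_{1j}$; thus agent $2$ is a pure ``interdependent'' agent whose report is irrelevant to her own true value, while agent $1$ fully controls the valuation profile. Fix four goods $\{a,b,c,d\}$ (MMS needs enough goods so that the fair-share target is nontrivial; dichotomous additive with four goods and value profile $(1,1,1,1)$ gives each agent an MMS of $2$). The first instance is $v_1=v_2=(1,1,1,1)$.

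First I would argue that in the first instance every PNE allocation must split the goods so each agent gets exactly two of them (value $2$ each): if some PNE gave agent $1$ three or four goods, then agent $2$ receives at most one good, so her value is at most $1<2=\mms{2}{1/2}{s}$, violating MMS (for EF1, agent $2$ with one good of value $1$ envies agent $1$'s bundle of value $\geq 3$ even after removing one good, so EF1 also fails); symmetrically if agent $2$ gets three or four. So there is a PNE, and it yields some split $(A_1,A_2)$ with $|A_1|=|A_2|=2$; let $(r_1^\star,b_1^\star),(r_2^\star,b_2^\star)$ be the reports realizing it. Now move to the second instance: agent $1$'s true signal becomes $s_{1j}=1$ iff $j\in A_1$, so $v_1=v_2=\mathbbm{1}_{A_1}$, i.e., both agents value only the two goods in $A_1$. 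Here the MMS of each agent is $1$ (the best partition puts one good of $A_1$ on each side), and an EF1 allocation also requires each agent to ``see'' value at least the other's minus one good.

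The key step is transporting the equilibrium. In the second instance, consider whether agent $1$ has a profitable deviation to $(r_1^\star,b_1^\star)$: since agent $2$'s output-relevant behavior does not depend on the valuation profile (her true value plays no role — only agent $1$'s reported signal feeds the valuations the mechanism sees, and I will set things up so agent $2$'s strategy space and the map $\mech$ are identical across the two instances), agent $2$'s equilibrium report $(r_2^\star,b_2^\star)$ is still available and still a best response structure; more carefully, I would show that the report vector $((r_1^\star,b_1^\star),(r_2^\star,b_2^\star))$ remains a PNE-candidate with agent $1$ receiving $A_1$, whose value to agent $1$ in the second instance is $2$ — the maximum possible. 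Hence in \emph{any} PNE of the second instance agent $1$ must receive value $2$, i.e., both goods of $A_1$, so agent $2$ receives $A_2\subseteq \{c,d\}$-type bundle disjoint from $A_1$ with value $0<1=\mms{2}{1/2}{s}$. That contradicts MMS-fairness of all equilibria; for EF1, agent $2$ has value $0$ while agent $1$ has value $2$, and removing one good from $A_1$ still leaves agent $2$ seeing value $1>0$, so EF1 fails too. The one subtlety to nail down — and the main obstacle — is making precise that the chooser/agent-2 side of the mechanism is genuinely ``the same'' in both instances: since valuations are public, the mechanism is formally a different object when $v_1,v_2$ change. I would handle this by noting the mechanism is an arbitrary fixed family, and instead treating the two valuation profiles as two signal vectors \emph{within one instance} whose signal space $\allsignals$ contains both, with the encoding that agent $2$'s signal is irrelevant to everyone's valuation and agent $1$'s signal determines the $(1,1,1,1)$-vs-$\mathbbm{1}_{A_1}$ profiles; then \Cref{def:mech,def:equalibrium} apply verbatim and the deviation argument goes through because the map $\mech:S\times B\to\allocations$ is a single fixed function. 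Finally I would check that four dichotomous goods suffice for both MMS and EF1 simultaneously and that the argument needs no assumption on how $\mech$ treats ties, completing the proof.
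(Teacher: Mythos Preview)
Your approach is essentially the paper's proof, specialized to $n=2$ and $m=4$ (the paper uses $n$ agents and $m=n^2$ goods for MMS, $m=2n$ for EF1, with only agent~1 carrying a nontrivial signal and $v_i(s,T)=\sum_{j\in T}s_{1j}$ for all $i$). The two-scenario trick, the choice of the second signal $s'_{1j}=\mathbbm{1}\{j\in A_1\}$, and the verification that the same report vector is still a PNE under $s'$ are all exactly as in the paper.

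One logical slip to fix: the line ``Hence in \emph{any} PNE of the second instance agent~1 must receive value~2'' does not follow from what precedes it. Knowing that $((r_1^\star,b_1^\star),(r_2^\star,b_2^\star))$ is a PNE in which agent~1 gets value~2 says nothing about \emph{other} PNE of the second instance, because in those agent~2 need not be playing $(r_2^\star,b_2^\star)$, so the deviation to $r_1^\star$ is not available as a benchmark. Fortunately you do not need this claim at all: to contradict ``all PNE are MMS/EF1 with respect to the true signals'' it suffices to exhibit \emph{one} PNE that is not, and you have already done so---the transported profile $((r_1^\star,b_1^\star),(r_2^\star,b_2^\star))$ itself is a PNE under $s'$ (agent~2 because her perceived valuation depends only on $r_1^\star$ and hence is unchanged; agent~1 because she already gets the maximum value~2), and in it agent~2's true value is $0<1=\mms{2}{1/2}{s'}$ and EF1 fails for the same reason. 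Drop the ``any PNE'' sentence and the proof is clean. Your handling of the ``same mechanism across instances'' subtlety---encoding both profiles as two signal vectors in a single instance---is exactly right and matches the paper's setup.
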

\begin{proof}
We begin with the notion of MMS. Consider an instance $\inst$ with $n$ agents and a set of goods $\items $ of size $m=n^2$,  equal entitlements (i.e., $\en{1}=\ldots=\en{n}=\frac{1}{n}$). The set of potential signals of agent $1$ is $\{0,1\}^{m}$, and the signal space of all other agents is a singleton $\{1\}$. The binary additive valuation function of all the agents is
$$ v_i(s,T) = \sum_{j\in T} s_{1j},$$
where $s_{1j}$ is the $j^{\text{th}}$ coordinate of the signal $s_1$. 
Now suppose that for every $j \in \items$, we have that $s_{1j}=1$. This implies that $v_i(s,\{j\})=1$, for every $i \in N$ and $j \in M$. Since there are $n^2$ goods in total, it is easy to verify that the MMS of every agent, with respect to the true signal vector $s$, is $n$. Therefore, an allocation that guarantees the MMS to all the agents, according to $s$, should allocate to every agent exactly $n$ goods.

Now consider a mechanism $\mech$, and assume towards contradiction that for every signal vector $s\in \allsignals$ it has at least one PNE, and all of its PNE are MMS-fair with respect to the true signal vector $s$. This implies that there is a report vector $(r_1,b_1),\ldots,(r_n,b_n)$, that is a PNE, and since this PNE has to be MMS-fair with respect to the true signals of the agents $s$, it induces some allocation $A=(A_1, A_2, \dots, A_n)$ such that every agent gets exactly $n$ goods (i.e., $|A_1|=\ldots=|A_n|=n$).

Now consider a different signal vector $s'$, such that 
\[ 
		s'_{1j}= \left\{
		\begin{array}{ll}
			1 & j \in A_1\\
			0 & j \in M\setminus A_1. \\
		\end{array} 
		\right. 
		\]
The crucial observation now is that since the valuation function of every agent $i \in\{2,3,\ldots n\}$ depends only on the report of agent 1, no agent $i \in\{2,3,\ldots n\}$ has a profitable deviation under report vector $(r_1,b_1),\ldots,(r_n,b_n)$, as the perceived values that they have over the goods, are the same between them, no matter what the true signal of agent 1 is. At the same time, agent 1 also has no profitable deviation as set $A_1$ is the only set of goods that has value for her. Therefore, report vector $(r_1,b_1),\ldots,(r_n,b_n)$, is a PNE. Now, according to the true signal vector $s'$, for every agent $i$ we have that 
\[ 
		v_i(s',\{j\})= \left\{
		\begin{array}{ll}
			1 & j \in A_1 \\
			0 &j \in M\setminus A_1. \\
		\end{array} 
		\right. 
		\]
It is easy to verify that the MMS value of every agent $i \in N$ is 1, while for every $i \in \{2,3,\ldots,n\}$, we have $v_i(s', A_i)=0$. Therefore, report vector $(r_1,b_1),\ldots,(r_n,b_n)$ is a PNE that corresponds to an allocation that it not MMS with respect to the true signal vector $s'$, a contradiction.  

The example above also shows an impossibility result with respect to the notion of EF1.
It is possible to strengthen the same claim with respect to EF1 by considering  an instance with  $m=2\cdot n$ goods (instead of $n^2$). Again, only agent $1$ influences the values, and for every agent $i\in\agents$ and good $j\in \items$, $v_{ij}=s_{1j}$. Under signal vector $s$ (where for good $j \in M$, we have that $s_{1j}=1$), in an EF1 allocation $A=(A_1, A_2, \ldots, A_n)$ each agent should have exactly 2 goods (if someone gets 3 or more goods, then there will be an agent that gets at most 1 good, which contradicts the EF1 property with respect to $s$). Thus, there exists a report vector $(r_1,b_1),\ldots,(r_n,b_n)$, that is a PNE which corresponds to an EF1 allocation $A$ under the true signal vector $s$. This report vector is also a PNE under the signal vector $s'$ (where $s_{1j}=1$, if $j \in A_1$, and $0$ otherwise). The corresponding allocation of this report is not EF1  with respect to true signal vector $s'$, a contradiction.
\end{proof}

\begin{remark}
   \normalfont Note that our counter-example considers additive valuation functions, with equal entitlements, and binary values, for which in the independent settings, there is a truthful allocation mechanism that is MMS and EF1 fair \citep{Aziz20,HalpernPPS20}. This creates a separation between the two models.
\end{remark}
\section{Future Directions}

In this paper, we initiated the study of fair division of indivisible goods, under agents with interdependent values. We showed that despite the complexity of this setting, the design of mechanisms that have at least one pure Nash equilibrium, which is fair with respect to the true signals of the agents, is still possible. In addition, the mechanisms that we designed, provide a variety of fairness guarantees, even for agents with valuation functions that go beyond the additive class. Finally, we also presented a negative result showing that it is not possible to have mechanisms that always admit PNE, and all the PNE of which, are fair (with respect to either MMS or EF1). The latter creates a clear separation between the interdependent and the independent valuation function setting.

Our work leaves several interesting questions open. In particular, for the case of 2 agents, it would be compelling to explore whether it is possible to design mechanisms that, besides having at least one fair PNE for every instance, also have additional \emph{efficiency} properties, e.g., Pareto Optimality (as this is not the case with our mechanisms). Moreover, moving to the general case of $n$ agents, the fair PNE that our mechanisms have, are of a very specific form, where an agent has to more or less \emph{guess} the true signals of the other agents. As this may be practically unappealing, it would be nice to examine if we can have mechanisms with fair equilibria of simpler structure. Finally, another interesting direction would be to see whether we could identify meaningful settings of interdependent values, where the impossibility results that we present do not apply, i.e., settings where it is possible to design allocation mechanisms, for which all the PNE are fair.

\section*{Acknowledgments}
This project is supported by the ERC Advanced Grant 788893 AMDROMA, EC H2020RIA project “SoBigData++” (871042), PNRR MUR project PE0000013-FAIR”, PNRR MUR project  IR0000013-SoBigData.it.

\bibliography{references}

\end{document}